\newif\if@in@acrolist
\newrobustcmd{\LU}[2]{\if@in@acrolist#1\else#2\fi}
\newcommand{\ACF}[1]{{\@in@acrolisttrue\acf{#1}}}
\newcommand\myoverset[2]{\overset{\textstyle #1\mathstrut}{#2}}
\pgfplotsset{compat=newest}
\definecolor{antiquewhite}{rgb}{0.98, 0.92, 0.84}
\definecolor{antiquefuchsia}{rgb}{0.57, 0.36, 0.51}\definecolor{chestnut}{rgb}{0.8, 0.36, 0.36}
\definecolor{airforceblue}{rgb}{0.36, 0.54, 0.66}
\definecolor{cadmiumorange}{rgb}{0.93, 0.53, 0.18}
\definecolor{bleudefrance}{rgb}{0.19, 0.55, 0.91}
\definecolor{carolinablue}{rgb}{0.6, 0.73, 0.89}
\definecolor{blue(ncs)}{rgb}{0.0, 0.53, 0.74}
\definecolor{dodgerblue}{rgb}{0.12, 0.56, 1.0}
\definecolor{cssgreen}{rgb}{0.0, 0.5, 0.0}
\definecolor{cadmiumgreen}{rgb}{0.0, 0.42, 0.24}
\definecolor{cadmiumorange}{rgb}{0.93, 0.53, 0.18}
\definecolor{amaranth}{rgb}{0.9, 0.17, 0.31}
\definecolor{bluegray}{rgb}{0.4, 0.6, 0.8}
\definecolor{cadmiumgreen}{rgb}{0.0, 0.42, 0.24}
\newtheorem{prop}{Proposition}
\newtheorem{lem}{Lemma}	
\newtheorem{cor}{Corollary}
\theoremstyle{definition}
\begin{document}
%\chapter*{Acronyms}
%\renewcommand{\titulonome}{Acronyms}%
%\renewcommand{\prepbynome}{UFC.33 Team}%

%\begin{singlespace}
\begin{acronym}[LTE-Advanced]%\addtolength{\itemsep}{-0.5\baselineskip}
  \acro{2G}{Second Generation}
  \acro{6G}{Sixth Generation}
  \acro{3-DAP}{3-Dimensional Assignment Problem}
  \acro{AA}{Antenna Array}
  \acro{AC}{Admission Control}
  \acro{AD}{Attack-Decay}
  \acro{ADC}{analog-to-digital conversion}
  \acro{ADMM}{alternating direction method of multipliers}
  \acro{ADSL}{Asymmetric Digital Subscriber Line}
  \acro{AHW}{Alternate Hop-and-Wait}
  \acro{AI}{Artificial Intelligence}
  \acro{AirComp}{Over-the-air computation}
  \acro{AM}{amplitude modulation}
  \acro{AMC}{Adaptive Modulation and Coding}
  \acro{AP}{\LU{A}{a}ccess \LU{P}{p}oint}
  \acro{APA}{Adaptive Power Allocation}
  \acro{ARMA}{Autoregressive Moving Average}
  \acro{ARQ}{\LU{A}{a}utomatic \LU{R}{r}epeat \LU{R}{r}equest}
  \acro{ATES}{Adaptive Throughput-based Efficiency-Satisfaction Trade-Off}
  \acro{AWGN}{additive white Gaussian noise}
  \acro{BAA}{\LU{B}{b}roadband \LU{A}{a}nalog \LU{A}{a}ggregation}
  \acro{BB}{Branch and Bound}
  \acro{BCD}{block coordinate descent}
  \acro{BD}{Block Diagonalization}
  \acro{BER}{Bit Error Rate}
  \acro{BF}{Best Fit}
  \acro{BFD}{bidirectional full duplex}
  \acro{BLER}{BLock Error Rate}
  \acro{BPC}{Binary Power Control}
  \acro{BPSK}{Binary Phase-Shift Keying}
  \acro{BRA}{Balanced Random Allocation}
  \acro{BS}{base station}
  \acro{BSUM}{block successive upper-bound minimization}
  \acro{CAP}{Combinatorial Allocation Problem}
  \acro{CAPEX}{Capital Expenditure}
  \acro{CBF}{Coordinated Beamforming}
  \acro{CBR}{Constant Bit Rate}
  \acro{CBS}{Class Based Scheduling}
  \acro{CC}{Congestion Control}
  \acro{CDF}{Cumulative Distribution Function}
  \acro{CDMA}{Code-Division Multiple Access}
  \acro{CE}{\LU{C}{c}hannel \LU{E}{e}stimation}
  \acro{CL}{Closed Loop}
  \acro{CLPC}{Closed Loop Power Control}
  \acro{CML}{centralized machine learning}
  \acro{CNR}{Channel-to-Noise Ratio}
  \acro{CNN}{\LU{C}{c}onvolutional \LU{N}{n}eural \LU{N}{n}etwork}
  \acro{CPA}{Cellular Protection Algorithm}
  \acro{CPICH}{Common Pilot Channel}
  \acro{CoCoA}{\LU{C}{c}ommunication efficient distributed dual \LU{C}{c}oordinate \LU{A}{a}scent}
  \acro{CoMAC}{\LU{C}{c}omputation over \LU{M}{m}ultiple-\LU{A}{a}ccess \LU{C}{c}hannels}
  \acro{CoMP}{Coordinated Multi-Point}
  \acro{CQI}{Channel Quality Indicator}
  \acro{CRM}{Constrained Rate Maximization}
	\acro{CRN}{Cognitive Radio Network}
  \acro{CS}{Coordinated Scheduling}
  \acro{CSI}{\LU{C}{c}hannel \LU{S}{s}tate \LU{I}{i}nformation}
  \acro{CSMA}{\LU{C}{c}arrier \LU{S}{s}ense \LU{M}{m}ultiple \LU{A}{a}ccess}
  \acro{CUE}{Cellular User Equipment}
  \acro{D2D}{device-to-device}
  \acro{DAC}{digital-to-analog converter}
  \acro{DC}{direct current}
  \acro{DCA}{Dynamic Channel Allocation}
  \acro{DE}{Differential Evolution}
  \acro{DFT}{Discrete Fourier Transform}
%  \acro{DIST}{Distance-based Grouping}
  \acro{DIST}{Distance}
  \acro{DL}{downlink}
  \acro{DMA}{Double Moving Average}
  \acro{DML}{Distributed Machine Learning}
  \acro{DMRS}{demodulation reference signal}
  \acro{D2DM}{D2D Mode}
  \acro{DMS}{D2D Mode Selection}
  \acro{DNN}{Deep Neural Network}
  \acro{DPC}{Dirty Paper Coding}
  \acro{DRA}{Dynamic Resource Assignment}
  \acro{DSA}{Dynamic Spectrum Access}
  \acro{DSGD}{\LU{D}{d}istributed \LU{S}{s}tochastic \LU{G}{g}radient \LU{D}{d}escent}
  \acro{DSM}{Delay-based Satisfaction Maximization}
  \acro{ECC}{Electronic Communications Committee}
  \acro{EFLC}{Error Feedback Based Load Control}
  \acro{EI}{Efficiency Indicator}
  \acro{eNB}{Evolved Node B}
  \acro{EPA}{Equal Power Allocation}
  \acro{EPC}{Evolved Packet Core}
  \acro{EPS}{Evolved Packet System}
  \acro{E-UTRAN}{Evolved Universal Terrestrial Radio Access Network}
  \acro{ES}{Exhaustive Search}
  %\acro{FD}{full duplex}
  \acro{FC}{\LU{F}{f}usion \LU{C}{c}enter}
  \acro{FD}{\LU{F}{f}ederated \LU{D}{d}istillation}
  \acro{FDD}{frequency division duplex}
  \acro{FDM}{Frequency Division Multiplexing}
  \acro{FDMA}{\LU{F}{f}requency \LU{D}{d}ivision \LU{M}{m}ultiple \LU{A}{a}ccess}
  \acro{FedAvg}{\LU{F}{f}ederated \LU{A}{a}veraging}
  \acro{FER}{Frame Erasure Rate}
  \acro{FF}{Fast Fading}
  \acro{FL}{Federated Learning}
  \acro{FSB}{Fixed Switched Beamforming}
  \acro{FST}{Fixed SNR Target}
  \acro{FTP}{File Transfer Protocol}
  \acro{GA}{Genetic Algorithm}
  \acro{GBR}{Guaranteed Bit Rate}
  \acro{GD}{gradient descent}
  \acro{GLR}{Gain to Leakage Ratio}
  \acro{GOS}{Generated Orthogonal Sequence}
  \acro{GPL}{GNU General Public License}
  \acro{GRP}{Grouping}
  \acro{HARQ}{Hybrid Automatic Repeat Request}
  \acro{HD}{half-duplex}
  \acro{HMS}{Harmonic Mode Selection}
  \acro{HOL}{Head Of Line}
  \acro{HSDPA}{High-Speed Downlink Packet Access}
  \acro{HSPA}{High Speed Packet Access}
  \acro{HTTP}{HyperText Transfer Protocol}
  \acro{ICMP}{Internet Control Message Protocol}
  \acro{ICI}{Intercell Interference}
  \acro{ID}{Identification}
  \acro{IETF}{Internet Engineering Task Force}
  \acro{ILP}{Integer Linear Program}
  \acro{JRAPAP}{Joint RB Assignment and Power Allocation Problem}
  \acro{UID}{Unique Identification}
  \acro{IID}{\LU{I}{i}ndependent and \LU{I}{i}dentically \LU{D}{d}istributed}
  \acro{IIR}{Infinite Impulse Response}
  \acro{ILP}{Integer Linear Problem}
  \acro{IMT}{International Mobile Telecommunications}
  \acro{INV}{Inverted Norm-based Grouping}
  \acro{IoT}{Internet of Things}
%  \acro{IP}{Internet Protocol}
  \acro{IP}{Integer Programming}
  \acro{IPv6}{Internet Protocol Version 6}
  \acro{IQ}{in-phase quadrature}
  \acro{ISD}{Inter-Site Distance}
  \acro{ISI}{Inter Symbol Interference}
  \acro{ITU}{International Telecommunication Union}
  \acro{JAFM}{joint assignment and fairness maximization}
  \acro{JAFMA}{joint assignment and fairness maximization algorithm}
  \acro{JOAS}{Joint Opportunistic Assignment and Scheduling}
  \acro{JOS}{Joint Opportunistic Scheduling}
  \acro{JP}{Joint Processing}
	\acro{JS}{Jump-Stay}
  \acro{KKT}{Karush-Kuhn-Tucker}
  \acro{L3}{Layer-3}
  \acro{LAC}{Link Admission Control}
  \acro{LA}{Link Adaptation}
  \acro{LC}{Load Control}
  \acro{LDC}{\LU{L}{l}earning-\LU{D}{d}riven \LU{C}{c}ommunication}
  \acro{LOS}{line of sight}
  \acro{LP}{Linear Programming}
  \acro{LTE}{Long Term Evolution}
	\acro{LTE-A}{\ac{LTE}-Advanced}
  \acro{LTE-Advanced}{Long Term Evolution Advanced}
  \acro{M2M}{Machine-to-Machine}
  \acro{MAC}{multiple access channel}
  \acro{MANET}{Mobile Ad hoc Network}
  \acro{MC}{Modular Clock}
  \acro{MCS}{Modulation and Coding Scheme}
  \acro{MDB}{Measured Delay Based}
  \acro{MDI}{Minimum D2D Interference}
  \acro{MF}{Matched Filter}
  \acro{MG}{Maximum Gain}
  \acro{MH}{Multi-Hop}
  \acro{MIMO}{\LU{M}{m}ultiple \LU{I}{i}nput \LU{M}{m}ultiple \LU{O}{o}utput}
  \acro{MINLP}{mixed integer nonlinear programming}
  \acro{MIP}{Mixed Integer Programming}
  \acro{MISO}{multiple input single output}
  \acro{ML}{Machine Learning}
  \acro{MLWDF}{Modified Largest Weighted Delay First}
  \acro{MME}{Mobility Management Entity}
  \acro{MMSE}{minimum mean squared error}
  \acro{MOS}{Mean Opinion Score}
  \acro{MPF}{Multicarrier Proportional Fair}
  \acro{MRA}{Maximum Rate Allocation}
  \acro{MR}{Maximum Rate}
  \acro{MRC}{Maximum Ratio Combining}
  \acro{MRT}{Maximum Ratio Transmission}
  \acro{MRUS}{Maximum Rate with User Satisfaction}
  \acro{MS}{Mode Selection}
  \acro{MSE}{\LU{M}{m}ean \LU{S}{s}quared \LU{E}{e}rror}
  \acro{MSI}{Multi-Stream Interference}
  \acro{MTC}{Machine-Type Communication}
  \acro{MTSI}{Multimedia Telephony Services over IMS}
  \acro{MTSM}{Modified Throughput-based Satisfaction Maximization}
  \acro{MU-MIMO}{Multi-User Multiple Input Multiple Output}
  \acro{MU}{Multi-User}
  \acro{NAS}{Non-Access Stratum}
  \acro{NB}{Node B}
	\acro{NCL}{Neighbor Cell List}
  \acro{NLP}{Nonlinear Programming}
  \acro{NLOS}{non-line of sight}
  \acro{NMSE}{Normalized Mean Square Error}
  \acro{NN}{Neural Network}
  \acro{NOMA}{\LU{N}{n}on-\LU{O}{o}rthogonal \LU{M}{m}ultiple \LU{A}{a}ccess}
  \acro{NORM}{Normalized Projection-based Grouping}
  \acro{NP}{non-polynomial time}
  \acro{NRT}{Non-Real Time}
  \acro{NSPS}{National Security and Public Safety Services}
  \acro{O2I}{Outdoor to Indoor}
  \acro{OAC}{over-the-air computation}
  \acro{OFDMA}{\LU{O}{o}rthogonal \LU{F}{f}requency \LU{D}{d}ivision \LU{M}{m}ultiple \LU{A}{a}ccess}
  \acro{OFDM}{Orthogonal Frequency Division Multiplexing}
  \acro{OFPC}{Open Loop with Fractional Path Loss Compensation}
	\acro{O2I}{Outdoor-to-Indoor}
  \acro{OL}{Open Loop}
  \acro{OLPC}{Open-Loop Power Control}
  \acro{OL-PC}{Open-Loop Power Control}
  \acro{OPEX}{Operational Expenditure}
  \acro{ORB}{Orthogonal Random Beamforming}
  \acro{JO-PF}{Joint Opportunistic Proportional Fair}
  \acro{OSI}{Open Systems Interconnection}
  \acro{PAIR}{D2D Pair Gain-based Grouping}
  \acro{PAPR}{Peak-to-Average Power Ratio}
  \acro{P2P}{Peer-to-Peer}
  \acro{PC}{Power Control}
  \acro{PCI}{Physical Cell ID}
  \acro{PDCCH}{physical downlink control channel}
  \acro{PDD}{penalty dual decomposition}
  \acro{PDF}{Probability Density Function}
  \acro{PER}{Packet Error Rate}
  \acro{PF}{Proportional Fair}
  \acro{P-GW}{Packet Data Network Gateway}
  \acro{PL}{Pathloss}
  \acro{PLL}{phase-locked Loop}
  \acro{PRB}{Physical Resource Block}
  \acro{PROJ}{Projection-based Grouping}
  \acro{ProSe}{Proximity Services}
%  \acro{PS}{Packet Scheduling}
%  \acro{PS}{phase shifter}
  \acro{PS}{\LU{P}{p}arameter \LU{S}{s}erver}
  \acro{PSO}{Particle Swarm Optimization}
  \acro{PUCCH}{physical uplink control channel}
  \acro{PZF}{Projected Zero-Forcing}
  \acro{QAM}{Quadrature Amplitude Modulation}
  \acro{QoS}{quality of service}
  \acro{QPSK}{Quadri-Phase Shift Keying}
  \acro{RAISES}{Reallocation-based Assignment for Improved Spectral Efficiency and Satisfaction}
  \acro{RAN}{Radio Access Network}
  \acro{RA}{Resource Allocation}
  \acro{RAT}{Radio Access Technology}
  \acro{RATE}{Rate-based}
  \acro{RB}{resource block}
  \acro{RBG}{Resource Block Group}
  \acro{REF}{Reference Grouping}
  \acro{RF}{radio frequency}
  \acro{RLC}{Radio Link Control}
  \acro{RM}{Rate Maximization}
  \acro{RNC}{Radio Network Controller}
  \acro{RND}{Random Grouping}
  \acro{RRA}{Radio Resource Allocation}
  \acro{RRM}{\LU{R}{r}adio \LU{R}{r}esource \LU{M}{m}anagement}
  \acro{RSCP}{Received Signal Code Power}
  \acro{RSRP}{reference signal receive power}
  \acro{RSRQ}{Reference Signal Receive Quality}
  \acro{RR}{Round Robin}
  \acro{RRC}{Radio Resource Control}
  \acro{RSSI}{received signal strength indicator}
  \acro{RT}{Real Time}
  \acro{RU}{Resource Unit}
  \acro{RUNE}{RUdimentary Network Emulator}
  \acro{RV}{Random Variable}
  \acro{SAC}{Session Admission Control}
  \acro{SCM}{Spatial Channel Model}
  \acro{SC-FDMA}{Single Carrier - Frequency Division Multiple Access}
  \acro{SD}{Soft Dropping}
  \acro{S-D}{Source-Destination}
  \acro{SDPC}{Soft Dropping Power Control}
  \acro{SDMA}{Space-Division Multiple Access}
  \acro{SDR}{semidefinite relaxation}
  \acro{SDP}{semidefinite programming}
  \acro{SER}{Symbol Error Rate}
  \acro{SES}{Simple Exponential Smoothing}
  \acro{S-GW}{Serving Gateway}
  \acro{SGD}{\LU{S}{s}tochastic \LU{G}{g}radient \LU{D}{d}escent}  
  \acro{SINR}{signal-to-interference-plus-noise ratio}
%   \acro{SI}{Satisfaction Indicator}
  \acro{SI}{self-interference}
  \acro{SIP}{Session Initiation Protocol}
  \acro{SISO}{\LU{S}{s}ingle \LU{I}{i}nput \LU{S}{s}ingle \LU{O}{o}utput}
  \acro{SIMO}{Single Input Multiple Output}
  \acro{SIR}{Signal to Interference Ratio}
  \acro{SLNR}{Signal-to-Leakage-plus-Noise Ratio}
  \acro{SMA}{Simple Moving Average}
  \acro{SNR}{\LU{S}{s}ignal-to-\LU{N}{n}oise \LU{R}{r}atio}
  \acro{SORA}{Satisfaction Oriented Resource Allocation}
  \acro{SORA-NRT}{Satisfaction-Oriented Resource Allocation for Non-Real Time Services}
  \acro{SORA-RT}{Satisfaction-Oriented Resource Allocation for Real Time Services}
  \acro{SPF}{Single-Carrier Proportional Fair}
  \acro{SRA}{Sequential Removal Algorithm}
  \acro{SRS}{sounding reference signal}
  \acro{SU-MIMO}{Single-User Multiple Input Multiple Output}
  \acro{SU}{Single-User}
  \acro{SVD}{Singular Value Decomposition}
  \acro{SVM}{\LU{S}{s}upport \LU{V}{v}ector \LU{M}{m}achine}
  \acro{TCP}{Transmission Control Protocol}
  \acro{TDD}{time division duplex}
  \acro{TDMA}{\LU{T}{t}ime \LU{D}{d}ivision \LU{M}{m}ultiple \LU{A}{a}ccess}
  \acro{TNFD}{three node full duplex}
  \acro{TETRA}{Terrestrial Trunked Radio}
  \acro{TP}{Transmit Power}
  \acro{TPC}{Transmit Power Control}
  \acro{TTI}{transmission time interval}
  \acro{TTR}{Time-To-Rendezvous}
  \acro{TSM}{Throughput-based Satisfaction Maximization}
  \acro{TU}{Typical Urban}
  \acro{UE}{\LU{U}{u}ser \LU{E}{e}quipment}
  \acro{UEPS}{Urgency and Efficiency-based Packet Scheduling}
  \acro{UL}{uplink}
  \acro{UMTS}{Universal Mobile Telecommunications System}
  \acro{URI}{Uniform Resource Identifier}
  \acro{URM}{Unconstrained Rate Maximization}
  \acro{VR}{Virtual Resource}
  \acro{VoIP}{Voice over IP}
  \acro{WAN}{Wireless Access Network}
  \acro{WCDMA}{Wideband Code Division Multiple Access}
  \acro{WF}{Water-filling}
  \acro{WiMAX}{Worldwide Interoperability for Microwave Access}
  \acro{WINNER}{Wireless World Initiative New Radio}
  \acro{WLAN}{Wireless Local Area Network}
  \acro{WMMSE}{weighted minimum mean square error}
  \acro{WMPF}{Weighted Multicarrier Proportional Fair}
  \acro{WPF}{Weighted Proportional Fair}
  \acro{WSN}{Wireless Sensor Network}
  \acro{WWW}{World Wide Web}
  \acro{XIXO}{(Single or Multiple) Input (Single or Multiple) Output}
  \acro{ZF}{Zero-Forcing}
  \acro{ZMCSCG}{Zero Mean Circularly Symmetric Complex Gaussian}
\end{acronym}
%\end{singlespace}

\bstctlcite{IEEEexample:BSTcontrol}
%
%\onecolumn
% paper title

%Possible titles

%Optimal Filter Design for Non-Coherent Over-the-Air Computation
%Over-the-Air Computation: Sample-Level Perspective
%Beating the Matched Filter

\title{Optimal Receive Filter Design for  

Misaligned Over-the-Air Computation}

\author{\IEEEauthorblockN{Henrik Hellström\IEEEauthorrefmark{1}\IEEEauthorrefmark{2}, Saeed Razavikia\IEEEauthorrefmark{1}, Viktoria Fodor\IEEEauthorrefmark{1}, Carlo Fischione\IEEEauthorrefmark{1}}
 \IEEEauthorblockA{\IEEEauthorrefmark{1}School of Electrical Engineering and Computer Science, KTH Royal Institute of Technology, Stockholm, Sweden\\
Emails: \{hhells, sraz, vjfodor, carlofi\}@kth.se}

\IEEEauthorblockA{\IEEEauthorrefmark{2}Electrical Engineering Department, Stanford University, California, USA\\
Email: hhells@stanford.edu}
}

\maketitle

\begin{abstract}	
Over-the-air computation (OAC) is a promising wireless communication method for aggregating data from many devices in dense wireless networks. The fundamental idea of OAC is to exploit signal superposition to compute functions of multiple simultaneously transmitted signals. However, the time- and phase-alignment of these superimposed signals have a significant effect on the quality of function computation. In this study, we analyze the OAC problem for a system with unknown random time delays and phase shifts. We show that the classical matched filter does not produce optimal results, and generates bias in the function estimates. To counteract this, we propose a new filter design and show that, under a bound on the maximum time delay, it is possible to achieve unbiased function computation. Additionally, we propose a Tikhonov regularization problem that produces an optimal filter given a tradeoff between the bias and noise-induced variance of the function estimates. When the time delays are long compared to the length of the transmitted pulses, our filter vastly outperforms the matched filter both in terms of bias and mean-squared error (MSE). For shorter time delays, our proposal yields similar MSE as the matched filter, while reducing the bias.
\end{abstract}

%===================================================
\section{Introduction}
%===================================================

\ac{OAC} is a non-orthogonal wireless communication method that exploits the signal-superposition property of wireless channels to compute mathematical functions \cite{gastpar2003source}. The unique selling point of \ac{OAC} is that the throughput scales linearly with the number of devices that share the multiple access channel~\cite{abari2016over}, thereby achieving significant gains for dense wireless networks. \ac{OAC} is currently an active area of research in distributed computing and has potential applications in areas such as distributed machine learning, intra-chip communications, and wireless control~\cite{csahin2023survey}. In this work, we focus on the \ac{OAC} problem itself without specifying any particular application of the method.

Although the signal-superposition property naturally computes a weighted sum of the transmitted signals, it is challenging to realize reliable function computation in practice~\cite{csahin2023survey}. One of the most notorious issues is time synchronization errors that cause the transmitted signals to arrive with time and phase misalignment at the receiving device. Consider that \ac{OAC} is taking place over the 5GHz WiFi band. Then, the period of the carrier wave is approximately $1/(5\text{GHz})=0.2\text{ns}$. If there is an unknown time synchronization error of $0.1\text{ns}$ between two transmitting devices, their superimposed signals will arrive $180^\circ$ out-of-phase, resulting in destructive interference. Meanwhile, modern standards generally cannot guarantee sub-$\mu\text{s}$ precision for IoT applications \cite{elsts2016microsecond} and sub-$260\text{ns}$ precision for 5G New Radio \cite{5gspecification}. In other words, even if the transmitting devices can perfectly compensate for the phase shifts of the wireless channel, tiny time-synchronization errors cause the superimposed signals to arrive with phase-misalignment. Note that in a point-to-point non-\ac{OAC} system, the phase of the received signal and the local oscillator can be aligned with a \ac{PLL} \cite[Chapter 7]{papananos1999radio}, but this does not solve the problem of phase alignment for the sake of superposition, as required in \ac{OAC}.

Additionally,
%given that we wish to compute more than one function using \ac{OAC}, 
time synchronization errors that exceed a few nanoseconds can cause inter-symbol interference between neighboring symbols. For instance, if the transmitting devices are communicating gradients, the elements of the gradient vectors are encoded as symbols. Generally, one wishes to pack these symbols tightly in time for the sake of communication efficiency, but with time synchronization errors, this will cause inter-symbol interference in the superimposed waveforms. In this work, we investigate the design of unbiased \ac{OAC} receivers under the real-world assumptions that no phase knowledge is available at either the transmitting or receiving devices and that the signals are misaligned in time.

%---------------------------------------------------
\subsection{Related Work}
%---------------------------------------------------

In the literature, the majority of \ac{OAC} proposals consider coherent communication schemes \cite{abari2016over, zhu2019broadband, cao2020optimized, liu2020over, zang2020over, hellstrom2022unbiased, hellstrom2023retransmission, razavikia2023computing}, i.e., schemes that rely on knowledge of the channel coefficients. All of these works consider some form of channel pre-equalization, where phase misalignment is compensated for by the transmitting devices. Implicitly, it is assumed that phase misalignment can be perfectly compensated and that there is no inter-symbol interference.

Recently, a few works have appeared that consider time synchronization errors but with supplementary information~\cite{shao2021federated, razavikia2022blind}. Specifically, \cite{shao2021federated} considers a system where rectangular pulses are transmitted with random time delays, but where the receiver knows the time delay of every device. The work in \cite{razavikia2022blind} assumes random and unknown time delays but with channel knowledge at the receiving device. They propose a synchronization-free estimator by formulating misaligned \ac{OAC} into an atomic norm minimization problem.
% =========================
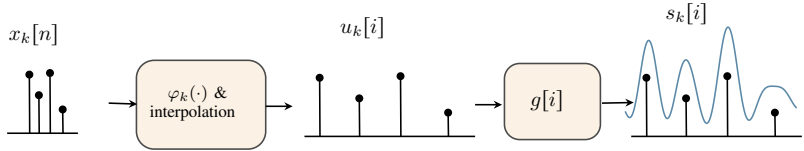
\begin{figure*}[!t]
\centering
\scalebox{0.8}{

\tikzset{every picture/.style={line width=0.75pt}} %set default line width to 0.75pt        

\begin{tikzpicture}[x=0.75pt,y=0.75pt,yscale=-1,xscale=1]
%uncomment if require: \path (0,127); %set diagram left start at 0, and has height of 127

%Straight Lines [id:da42079222359818846] 
\draw    (76.5,92) -- (121.11,92) ;
%Straight Lines [id:da9053021816658018] 
\draw [line width=0.75]    (90.71,92) -- (90.5,55) ;
\draw [shift={(90.5,55)}, rotate = 269.67] [color={rgb, 255:red, 0; green, 0; blue, 0 }  ][fill={rgb, 255:red, 0; green, 0; blue, 0 }  ][line width=0.75]      (0, 0) circle [x radius= 2.01, y radius= 2.01]   ;
% Plotting does not support converting to Tikz
%Straight Lines [id:da9570209190184586] 
\draw    (451.5,73) -- (468.5,72.67) ;
\draw [shift={(471.5,72.61)}, rotate = 178.88] [fill={rgb, 255:red, 0; green, 0; blue, 0 }  ][line width=0.08]  [draw opacity=0] (5.36,-2.57) -- (0,0) -- (5.36,2.57) -- (3.56,0) -- cycle    ;
%Straight Lines [id:da33593109297364965] 
\draw [line width=0.75]    (103.71,92) -- (103.5,54) ;
\draw [shift={(103.5,54)}, rotate = 269.68] [color={rgb, 255:red, 0; green, 0; blue, 0 }  ][fill={rgb, 255:red, 0; green, 0; blue, 0 }  ][line width=0.75]      (0, 0) circle [x radius= 2.01, y radius= 2.01]   ;
%Straight Lines [id:da8229061128387927] 
\draw [line width=0.75]    (96.71,92) -- (96.5,68) ;
\draw [shift={(96.5,68)}, rotate = 269.49] [color={rgb, 255:red, 0; green, 0; blue, 0 }  ][fill={rgb, 255:red, 0; green, 0; blue, 0 }  ][line width=0.75]      (0, 0) circle [x radius= 2.01, y radius= 2.01]   ;
%Straight Lines [id:da8110100869757357] 
\draw [line width=0.75]    (111.71,92) -- (111.5,77) ;
\draw [shift={(111.5,77)}, rotate = 269.18] [color={rgb, 255:red, 0; green, 0; blue, 0 }  ][fill={rgb, 255:red, 0; green, 0; blue, 0 }  ][line width=0.75]      (0, 0) circle [x radius= 2.01, y radius= 2.01]   ;
%Rounded Rect [id:dp6024632082236814] antiquewhite
\draw  [color={rgb, 255:red, 74; green, 74; blue, 74 }  ,draw opacity=1 ][fill={rgb, 1:red, 0.98; green, 0.92; blue, 0.84 }  ,fill opacity=0.62 ] (158,57.2) .. controls (158,51.02) and (163.01,46) .. (169.2,46) -- (228.3,46) .. controls (234.49,46) and (239.5,51.02) .. (239.5,57.2) -- (239.5,90.8) .. controls (239.5,96.99) and (234.49,102) .. (228.3,102) -- (169.2,102) .. controls (163.01,102) and (158,96.99) .. (158,90.8) -- cycle ;
%Straight Lines [id:da2998825223188297] 
\draw    (263.93,94) -- (371.5,94) ;
%Straight Lines [id:da45982805147179473] 
\draw [line width=0.75]    (273.71,94) -- (273.5,57) ;
\draw [shift={(273.5,57)}, rotate = 269.67] [color={rgb, 255:red, 0; green, 0; blue, 0 }  ][fill={rgb, 255:red, 0; green, 0; blue, 0 }  ][line width=0.75]      (0, 0) circle [x radius= 2.01, y radius= 2.01]   ;
%Straight Lines [id:da8155078784171801] 
\draw [line width=0.75]    (324.71,94) -- (324.5,56) ;
\draw [shift={(324.5,56)}, rotate = 269.68] [color={rgb, 255:red, 0; green, 0; blue, 0 }  ][fill={rgb, 255:red, 0; green, 0; blue, 0 }  ][line width=0.75]      (0, 0) circle [x radius= 2.01, y radius= 2.01]   ;
%Straight Lines [id:da25303517547750753] 
\draw [line width=0.75]    (298.71,94) -- (298.5,70) ;
\draw [shift={(298.5,70)}, rotate = 269.49] [color={rgb, 255:red, 0; green, 0; blue, 0 }  ][fill={rgb, 255:red, 0; green, 0; blue, 0 }  ][line width=0.75]      (0, 0) circle [x radius= 2.01, y radius= 2.01]   ;
%Straight Lines [id:da161882367659256] 
\draw [line width=0.75]    (354.71,94) -- (354.5,79) ;
\draw [shift={(354.5,79)}, rotate = 269.18] [color={rgb, 255:red, 0; green, 0; blue, 0 }  ][fill={rgb, 255:red, 0; green, 0; blue, 0 }  ][line width=0.75]      (0, 0) circle [x radius= 2.01, y radius= 2.01]   ;
%Straight Lines [id:da20121774864248443] 
\draw    (140.5,73) -- (155.5,73.51) ;
\draw [shift={(158.5,73.61)}, rotate = 181.93] [fill={rgb, 255:red, 0; green, 0; blue, 0 }  ][line width=0.08]  [draw opacity=0] (5.36,-2.57) -- (0,0) -- (5.36,2.57) -- (3.56,0) -- cycle    ;
%Straight Lines [id:da19477253615657153] 
\draw    (239.5,75) -- (253.5,75) ;
\draw [shift={(256.5,75)}, rotate = 180] [fill={rgb, 255:red, 0; green, 0; blue, 0 }  ][line width=0.08]  [draw opacity=0] (5.36,-2.57) -- (0,0) -- (5.36,2.57) -- (3.56,0) -- cycle    ;
%Rounded Rect [id:dp04485605703492013] 
\draw  [color={rgb, 255:red, 74; green, 74; blue, 74 }  ,draw opacity=1 ][fill={rgb, 1:red, 0.98; green, 0.92; blue, 0.84 }  ,fill opacity=0.62 ] (390,57.6) .. controls (390,52.3) and (394.3,48) .. (399.6,48) -- (440.9,48) .. controls (446.2,48) and (450.5,52.3) .. (450.5,57.6) -- (450.5,86.4) .. controls (450.5,91.7) and (446.2,96) .. (440.9,96) -- (399.6,96) .. controls (394.3,96) and (390,91.7) .. (390,86.4) -- cycle ;
%Straight Lines [id:da20110638294867056] 
\draw    (371.5,74) -- (385.5,74) ;
\draw [shift={(388.5,74)}, rotate = 180] [fill={rgb, 255:red, 0; green, 0; blue, 0 }  ][line width=0.08]  [draw opacity=0] (5.36,-2.57) -- (0,0) -- (5.36,2.57) -- (3.56,0) -- cycle    ;
%Straight Lines [id:da4479901425563695] 
\draw    (469.93,94) -- (577.5,94) ;
%Straight Lines [id:da20715876305785685] 
\draw [line width=0.75]    (479.71,94) -- (479.5,57) ;
\draw [shift={(479.5,57)}, rotate = 269.67] [color={rgb, 255:red, 0; green, 0; blue, 0 }  ][fill={rgb, 255:red, 0; green, 0; blue, 0 }  ][line width=0.75]      (0, 0) circle [x radius= 2.01, y radius= 2.01]   ;
%Straight Lines [id:da46356513399808486] 
\draw [line width=0.75]    (530.71,94) -- (530.5,56) ;
\draw [shift={(530.5,56)}, rotate = 269.68] [color={rgb, 255:red, 0; green, 0; blue, 0 }  ][fill={rgb, 255:red, 0; green, 0; blue, 0 }  ][line width=0.75]      (0, 0) circle [x radius= 2.01, y radius= 2.01]   ;
%Straight Lines [id:da6485581275258681] 
\draw [line width=0.75]    (504.71,94) -- (504.5,70) ;
\draw [shift={(504.5,70)}, rotate = 269.49] [color={rgb, 255:red, 0; green, 0; blue, 0 }  ][fill={rgb, 255:red, 0; green, 0; blue, 0 }  ][line width=0.75]      (0, 0) circle [x radius= 2.01, y radius= 2.01]   ;
%Straight Lines [id:da7329433462756967] 
\draw [line width=0.75]    (560.71,94) -- (560.5,79) ;
\draw [shift={(560.5,79)}, rotate = 269.18] [color={rgb, 255:red, 0; green, 0; blue, 0 }  ][fill={rgb, 255:red, 0; green, 0; blue, 0 }  ][line width=0.75]      (0, 0) circle [x radius= 2.01, y radius= 2.01]   ;

% Text Node
\draw (76,21.4) node [anchor=north west][inner sep=0.75pt]  [font=\normalsize]  {$x_{k}[n]$};
% Text Node
\draw (165,60.2) node [anchor=north west][inner sep=0.75pt]  [font=\footnotesize] [align=left] {$~~~\varphi_k(\cdot)$ \& \\interpolation};
% Text Node
\draw (285,17.4) node [anchor=north west][inner sep=0.75pt]  [font=\normalsize]  {$u_{k}[i]$};
% Text Node
\draw (405,63.4) node [anchor=north west][inner sep=0.75pt]  [font=\normalsize]  {$g[i]$};
% Text Node
\draw (490,8.4) node [anchor=north west][inner sep=0.75pt]  [font=\normalsize]  {$s_{k}[i]$};

\begin{axis}[
width=5cm,
height=3.5cm,
xticklabels=none,
 yticklabels=none,
 xtick=\empty,
 ytick=\empty,
 axis line style={draw=none},
 xshift=12cm,yshift=0.5cm,
 ]
\addplot[
    domain=-20:20,
    samples=500, 
    color=airforceblue,
]{-15*sin(deg(x+14.5))/(x+14.5)-10*sin(deg(x+6))/(x+6)-6*sin(deg(x-15.1))/(x-15.1) -20*sin(deg(x-4))/(x-4)};
\end{axis}
% \begin{axis}[
% width=5cm,
% height=3.5cm,
% xticklabels=none,
%  yticklabels=none,
%  xtick=\empty,
%  ytick=\empty,
%  axis line style={draw=none},
%  xshift=1cm,yshift=0.5cm,
%  ]
% \addplot[
%     domain=0:2,
%     samples=500, 
%     color=airforceblue,
% ]{1 + 1*cos(deg(3*x))};
% \end{axis}

\end{tikzpicture}
}
\caption{
Transmitter-side baseband illustration. Here,  $x_k[n]$ denotes the messages of device $k$, which are encoded into IQ symbols by the function $\varphi_k$ and interpolated by $N_s$ samples to generate the pulse train $u_k[i]$. Then, the baseband waveform is generated by convolving with the pulse-shaping filter $g[i]$. Finally, in a practical system, the signal $s_k[i]$ would be upconverted to the carrier frequency and transmitted over the \ac{MAC}. However, in this paper, we only model the baseband representation.
}
\label{fig:Transmitter}
\end{figure*}

% =========================

Finally, there are some proposals that consider both time misalignment and unknown phases of the channel coefficients \cite{goldenbaum2009function, goldenbaum2013robust, kortke2014analog, csahin2022over, csahin2023distributed, csahin2022demonstration}. These works can be divided into two groups, majority vote \ac{OAC} \cite{csahin2022over, csahin2023distributed, csahin2022demonstration} and analog amplitude-modulated \ac{OAC} \cite{goldenbaum2009function, goldenbaum2013robust, kortke2014analog}. In majority vote \ac{OAC}, orthogonal radio resources are allocated for every symbol, and the devices vote by transmitting in one of the orthogonal resources. In the analog amplitude-modulated \ac{OAC} scheme, each device transmits a unimodular sequence of random phases at a transmit power that depends on the transmitted message. This results in a good estimator, at the cost of reducing communication efficiency proportionally to the length of the transmitted sequence. Both the analog amplitude-modulated \ac{OAC} and the majority vote \ac{OAC} has been implemented in practice using software-defined radio \cite{kortke2014analog, csahin2022demonstration}.

Our work can be viewed as an extension of analog amplitude-modulated \ac{OAC} where we consider a sample-level model. This is in contrast to what is done in \cite{goldenbaum2013robust}, where the authors present a symbol-level \ac{OAC} scheme that describes a level of abstraction at which the symbol is the smallest unit of information; such a description abstracts away that each symbol is actually comprised of multiple samples, which form the baseband waveform. By presenting a sample-level scheme, our work reveals multiple features of \ac{OAC} that cannot be expressed analytically using symbols, such as handling sub-symbol level time offsets.
%---------------------------------------------------
\subsection{Contributions}
%---------------------------------------------------
\begin{itemize}
    \item \textbf{Unbiased Time-Asynchronous \ac{OAC}}: We show that, given an upper bound on the time synchronization error, it is possible to achieve unbiased function estimation with \ac{OAC}, even without phase knowledge or knowledge of the time delays. We provide the conditions under which this occurs, relating to the maximum delay, the number of samples per symbol, and the pulse shaping filter.
    \item \textbf{Receive Filter Design}: We demonstrate that the matched filter is not optimal for time misaligned \ac{OAC} both in terms of bias and \ac{MSE}. Therefore we propose a new receive filter that yields completely unbiased function estimation. However, our unbiased filter amplifies the noise considerably. Therefore we also propose an optimal filter design that relaxes the bias constraint to significantly reduce the noise amplification. This optimal filter significantly outperforms the matched filter, both in terms of bias and \ac{MSE}.
\end{itemize}
%---------------------------------------------------
%\subsection{Organization of the paper}
%---------------------------------------------------

The rest of the paper is organized as follows: Section \ref{sec:standard} describes the sample-level baseband for a general receive filter $\bm{A}$. Section \ref{sec:as_OAC} derives the conditions on $\bm{A}$ to achieve unbiased function estimation and develops our proposed filter. Section \ref{sec:num} contains numerical results that compare the performance of our filter to the matched filter. Finally, Section \ref{sec:conclusion} concludes the paper.

%---------------------------------------------------
\subsection{Notation}
%---------------------------------------------------

Throughout this paper, scalars are denoted by lowercase letters $x$, vectors and matrices by lower $\bm{x}$, and upper-case boldface letters $\bm{X}$, respectively. Operators are represented by calligraphic notations such as $\mathcal{X}$. The transpose, Hermitian, and the element-wise complex conjugate of a matrix $\bm{X}$ are represented by $\bm{X}^{\mathsf{T}}$, $\bm{X}^{\mathsf{H}}$, and $\overline{\bm{X}}$, respectively. We further use $\otimes$ to show the Kronecker product and $\odot$ to show the Hadamard product. Moreover, $\bm{e}_n\in \mathbb{R}^{N}$ stands for $n$-th canonical basis vector whose $n$-th element is one and zeros elsewhere.

The Hankel lifting operator $\mathscr{H}:\mathbb{R}^{L} \mapsto  \mathbb{R}^{(d+1)\times (L-d)}$ for $d\in [L-1]$ is defined as follows
%---------------
\begin{equation}
    \label{eq:Hankel}
    \mathscr{H}(\bm{x}): = \begin{bmatrix}
        x_{d+1} & x_{d+2} & \dots & x_{L} \\
        x_{d} & x_{d+1} & \dots & x_{L-1} \\
        \vdots \\
        x_{1} & x_{2} & \dots & x_{L-d} 
    \end{bmatrix}.
\end{equation}
%---------------

%===================================================
\section{System Model}\label{sec:standard}
%===================================================

In this section, we describe a system model for \ac{OAC} at the sample level. We describe all components of the baseband transceiver and the wireless channel model, including synchronization errors in discrete time.

%---------------------------------------------------
\subsection{Transmitter Architecture}
%---------------------------------------------------

Consider a communication network with $K$ devices and one \ac{FC}, where device $k$ carries a sequence of $N$ messages $x_k[n]$ taking values from a compact subset of real domain, i.e., $\mathbb{D}_f := [x_{\textnormal{min}}, x_{\textnormal{max}}]\subset \mathbb{R}$ for discrete $n \in [N]$ and $k\in [K]$.  We aim to compute functions  $f_n(x_1[n],x_2[n],\ldots,x_K[n])$ at the \ac{FC} via the wireless channel for $n \in [N]$. Without loss of generality, we consider $f$ to be the arithmetic mean, i.e.,  
 %--------------
 \begin{align}
    \label{eq:function}
     f_n = \frac{1}{K}\sum\nolimits_{k=1}^Kx_k[n], \quad \forall~n\in [N].
 \end{align}
 %--------------
Note that \ac{OAC} theoretically can compute all nomographic functions by adding pre- and post-processing to the transmitted and received signals, respectively~\cite{Golden2013Harnessing}. 

To transmit the messages $x_k[n]$,  the values are encoded in \ac{IQ} symbols according to the analog modulation function $\varphi_k: \mathbb{D}\mapsto \mathbb{R}$. As in \cite{goldenbaum2013robust}, we consider an \ac{AM} scheme where the magnitude of the uplink channel coefficients $h_k$ are known. Specifically, the square root of the messages $x_k[n]$ are encoded with the modulation function $\varphi_k(x_k[n]) = \sqrt{x_k[n]}/|h_k|$\footnote{Some devices may have insufficient transmission power to realize $1/|h_k|$. However, we consider that problem out-of-scope and refer to \cite{cao2020optimized, liu2020over, zang2020over, hellstrom2022unbiased, hellstrom2023retransmission}.}. After modulation, each symbol $\varphi_k(x_k[n])$ is convolved with a pulse shaping filter $g[i]$ of size $N_s$, where $n$ indexes symbols and $i$ indexes samples. As a result, each symbol $\varphi_k(x_k[n])$ is transmitted with a pulse consisting of $N_s$ samples. Mathematically, the baseband signal is given by
 %--------------
\begin{equation}
\label{eq:interpolation}
    s_k[i] = g[i]\circledast   u_k[i], 
\end{equation}
 %--------------
where $u_k[i]$ is the interpolated signal of $\varphi_k(x_k[n])$ by the integer factor $N_s$, i.e., 
  %--------------
\begin{equation}
\label{eq:upsample}
    u_k[i] = \begin{cases}
        \varphi_k(x_k[i/N_s]), & N_s~{\rm divides}~i \\
        0, & {\rm otherwise}.
    \end{cases}
\end{equation}
 %--------------
To simplify the notation, we use the following matrix representation.
%--------------
\begin{align}
    \bm{s}_k =\bm{G}\bm{\varphi}_k,
\end{align}
%--------------
where  $\bm{\varphi}_k = [\varphi_k(x_k[1]), \ldots, \varphi_k(x_k[N])]^{\mathsf{T}} \in \mathbb{R}^{N}$ and $\bm{G} : = \bm{I}_N \otimes \bm{g} \in \mathbb{R}^{N N_s \times N}$ is the pulse shaping matrix where $\bm{g} : = [g[1], \ldots, g[N_s]]^{\mathsf{T}}$ are the filter taps $g[i]$. This baseband signal $\bm{s}_k$ is a discrete sequence of $N_t := N\times N_s$ samples, that convey the information of the $N$ symbols $\varphi_k(x_k[N])$. Fig~\ref{fig:Transmitter} depicts The procedure of producing the baseband signal. 

As suggested in \cite{goldenbaum2013robust}, device $k$ does not only transmit this sequence $\bm{s}_k$ once but generates an $M$-length unimodular sequence of random phases to communicate the functions. In our scheme, this can be realized by communicating $M$ copies of $\bm{s}_k$, phase-shifted by $\theta_{k,m}\sim\mathcal{U}[0,2\pi)$, resulting in
\begin{align}\label{eq:tx_m}
    \bm{s}_k^{(m)} =\bm{G}\bm{\varphi}_k{\rm e}^{j\theta_{k,m}}, \quad \forall~m\in [M].
\end{align}
For conciseness, we will only invoke the superscript $(m)$ as deemed necessary.

%=============================
\subsection{Modeling Time offset of the channel}
%=============================

As a result of time asynchronous transmission in the uplink (from the devices to the \ac{FC}), the pulse-shaped baseband signals $s_k[i]$ are shifted in time with respect to one another. Therefore, the \ac{FC} receives the superposition of time-shifted signals $s_k[i-d_k]$. For simplicity's sake, we assume that the $K$ signals are shifted with integer multiples of the sampling time $T_s=1/f_s$. Device $k$ is subject to a delay of $d_k \leq d < N_s$ samples, i.e., symbol-level synchronization. These shifts are expressed by the shifting matrix $\bm{E} \in \mathbb{R}^{N_t\times N_t}$ where $N_t := N\times N_s$ as follows
%--------------
\begin{equation}
    \label{eq:timeshift}
    \bm{E} : = \begin{bmatrix}
            \bm{0}_{N_t-1}^{\mathsf{T}} & 0 \\
            \bm{I}_{N_t-1} &  \bm{0}_{N_t-1}
    \end{bmatrix},
\end{equation}
%--------------
where $\bm{0}_{N_t-1}$ is the zero vector of size $N_t-1$. For a signal shifted by $d_k$ samples, the pulse shape is multiplied by $\bm{E}^{d_k}\bm{G}$. 

% =======================
\begin{figure*}
    \centering
\scalebox{0.75}{

\tikzset{every picture/.style={line width=0.75pt}} %set default line width to 0.75pt        

\begin{tikzpicture}[x=0.75pt,y=0.75pt,yscale=-1,xscale=1]
%uncomment if require: \path (0,217); %set diagram left start at 0, and has height of 217

%Straight Lines [id:da42079222359818846] 
\draw    (82.5,74) -- (203.5,74) ;
%Straight Lines [id:da9570209190184586] 
\draw    (537.5,123) -- (554.5,122.67) ;
\draw [shift={(557.5,122.61)}, rotate = 178.88] [fill={rgb, 255:red, 0; green, 0; blue, 0 }  ][line width=0.08]  [draw opacity=0] (5.36,-2.57) -- (0,0) -- (5.36,2.57) -- (3.56,0) -- cycle    ;
%Straight Lines [id:da2998825223188297] 
\draw    (364.93,144) -- (472.5,144) ;
%Straight Lines [id:da19477253615657153] 
\draw    (288.5,124.5) -- (307.5,124.07) ;
\draw [shift={(310.5,124)}, rotate = 178.7] [fill={rgb, 255:red, 0; green, 0; blue, 0 }  ][line width=0.08]  [draw opacity=0] (5.36,-2.57) -- (0,0) -- (5.36,2.57) -- (3.56,0) -- cycle    ;
%Rounded Rect [id:dp04485605703492013] 
\draw  [color={rgb, 255:red, 74; green, 74; blue, 74 }  ,draw opacity=1 ][fill={rgb, 1:red, 0.98; green, 0.92; blue, 0.84 }  ,fill opacity=0.62 ] (501,110.3) .. controls (501,106.27) and (504.27,103) .. (508.3,103) -- (530.2,103) .. controls (534.23,103) and (537.5,106.27) .. (537.5,110.3) -- (537.5,132.7) .. controls (537.5,136.73) and (534.23,140) .. (530.2,140) -- (508.3,140) .. controls (504.27,140) and (501,136.73) .. (501,132.7) -- cycle ;
%Straight Lines [id:da20110638294867056] 
\draw    (478.5,123) -- (497.5,123) ;
\draw [shift={(500.5,123)}, rotate = 180] [fill={rgb, 255:red, 0; green, 0; blue, 0 }  ][line width=0.08]  [draw opacity=0] (5.36,-2.57) -- (0,0) -- (5.36,2.57) -- (3.56,0) -- cycle    ;
% Plotting does not support converting to Tikz
%Straight Lines [id:da3909798428771629] 
\draw    (87.5,208) -- (208.5,208) ;
% Plotting does not support converting to Tikz
%Straight Lines [id:da1641362887169251] 
\draw    (224.5,70) -- (266.43,113.83) ;
\draw [shift={(268.5,116)}, rotate = 226.27] [fill={rgb, 255:red, 0; green, 0; blue, 0 }  ][line width=0.08]  [draw opacity=0] (5.36,-2.57) -- (0,0) -- (5.36,2.57) -- (3.56,0) -- cycle    ;
%Straight Lines [id:da7194484935328394] 
\draw    (221.5,175) -- (267.22,135.95) ;
\draw [shift={(269.5,134)}, rotate = 139.5] [fill={rgb, 255:red, 0; green, 0; blue, 0 }  ][line width=0.08]  [draw opacity=0] (5.36,-2.57) -- (0,0) -- (5.36,2.57) -- (3.56,0) -- cycle    ;
%Flowchart: Or [id:dp7504131410395087] 
\draw   (265,124.5) .. controls (265,118.15) and (270.26,113) .. (276.75,113) .. controls (283.24,113) and (288.5,118.15) .. (288.5,124.5) .. controls (288.5,130.85) and (283.24,136) .. (276.75,136) .. controls (270.26,136) and (265,130.85) .. (265,124.5) -- cycle ; \draw   (265,124.5) -- (288.5,124.5) ; \draw   (276.75,113) -- (276.75,136) ;
% Plotting does not support converting to Tikz
%Flowchart: Or [id:dp198768969885454] 
\draw   (310.5,124) .. controls (310.5,117.65) and (315.76,112.5) .. (322.25,112.5) .. controls (328.74,112.5) and (334,117.65) .. (334,124) .. controls (334,130.35) and (328.74,135.5) .. (322.25,135.5) .. controls (315.76,135.5) and (310.5,130.35) .. (310.5,124) -- cycle ; \draw   (310.5,124) -- (334,124) ; \draw   (322.25,112.5) -- (322.25,135.5) ;
%Straight Lines [id:da7141958624656499] 
\draw    (334,124) -- (353,123.57) ;
\draw [shift={(356,123.5)}, rotate = 178.7] [fill={rgb, 255:red, 0; green, 0; blue, 0 }  ][line width=0.08]  [draw opacity=0] (5.36,-2.57) -- (0,0) -- (5.36,2.57) -- (3.56,0) -- cycle    ;
%Straight Lines [id:da8476395110499635] 
\draw    (322.5,93) -- (322.29,109.5) ;
\draw [shift={(322.25,112.5)}, rotate = 270.73] [fill={rgb, 255:red, 0; green, 0; blue, 0 }  ][line width=0.08]  [draw opacity=0] (5.36,-2.57) -- (0,0) -- (5.36,2.57) -- (3.56,0) -- cycle    ;
%Rounded Rect [id:dp7045108827691717] 
\draw  [color={rgb, 255:red, 74; green, 74; blue, 74 }  ,draw opacity=1 ][fill={rgb, 1:red, 0.98; green, 0.92; blue, 0.84 }  ,fill opacity=0.62 ] (559,111.3) .. controls (559,107.27) and (562.27,104) .. (566.3,104) -- (588.2,104) .. controls (592.23,104) and (595.5,107.27) .. (595.5,111.3) -- (595.5,133.7) .. controls (595.5,137.73) and (592.23,141) .. (588.2,141) -- (566.3,141) .. controls (562.27,141) and (559,137.73) .. (559,133.7) -- cycle ;
%Straight Lines [id:da21044842368935668] 
\draw    (595.5,123) -- (612.5,122.67) ;
\draw [shift={(615.5,122.61)}, rotate = 178.88] [fill={rgb, 255:red, 0; green, 0; blue, 0 }  ][line width=0.08]  [draw opacity=0] (5.36,-2.57) -- (0,0) -- (5.36,2.57) -- (3.56,0) -- cycle    ;

% Text Node
\draw (100,8.4) node [anchor=north west][inner sep=0.75pt]  [font=\normalsize]  {$\bm{E}^{d_{1}}\bm{s}_{1}$};
% Text Node
\draw (400,62.4) node [anchor=north west][inner sep=0.75pt]  [font=\normalsize]  {$|\bm{v}|$};
% Text Node
\draw (511,112.7) node [anchor=north west][inner sep=0.75pt]  [font=\normalsize]  {$\bm{A}$};
% Text Node
\draw (100,142.4) node [anchor=north west][inner sep=0.75pt]  [font=\normalsize]  {$\bm{E}^{d_{K}}\bm{s}_{K}$};
% Text Node
\draw (134,99.4) node [anchor=north west][inner sep=0.75pt]  [font=\LARGE]  {$\vdots $};
% Text Node
\draw (245,65.4) node [anchor=north west][inner sep=0.75pt]  [font=\normalsize]  {$h_{1}$};
% Text Node
\draw (247.5,157.9) node [anchor=north west][inner sep=0.75pt]  [font=\normalsize]  {$h_{K}$};
% Text Node
\draw (318,69.4) node [anchor=north west][inner sep=0.75pt]  [font=\normalsize]  {$\bm{z}$};
% Text Node
\draw (544,90.4) node [anchor=north west][inner sep=0.75pt]  [font=\normalsize]  {$\bm{y}$};
% Text Node
\draw (565,113.4) node [anchor=north west][inner sep=0.75pt]  [font=\normalsize]  {$\psi(\cdot)$};
% Text Node
\draw (621,109.4) node [anchor=north west][inner sep=0.75pt]  [font=\normalsize]  {$\hat{\bm{f}}$};

% = ===== Curve ===== g1(t) 

\begin{axis}[
width=5cm,
height=3.5cm,
xticklabels=none,
 yticklabels=none,
 xtick=\empty,
 ytick=\empty,
 axis line style={draw=none},
 xshift=2.25cm,yshift=-0.1cm,
 ]
\addplot[
    domain=-20:20,
    samples=500, 
    color=cadmiumorange,
]{0.1*(-5*sin(2*deg(x))/x+cos(deg(x)-3))};
\end{axis}

% = ===== Curve ===== gr(t)
\begin{axis}[
width=5cm,
height=3.5cm,
xticklabels=none,
 yticklabels=none,
 xtick=\empty,
 ytick=\empty,
 axis line style={draw=none},
 xshift=2.25cm,yshift=3.55cm,
 ]
\addplot[
    domain=-20:20,
    samples=500, 
    color=cadmiumgreen,
]{0.1*(-5*sin(2*deg(x-5))/(x-5)+cos(deg(x)-1))};
\end{axis}
% = ===== Curve ===== y(t)
\begin{axis}[
width=5cm,
height=3.5cm,
xticklabels=none,
 yticklabels=none,
 xtick=\empty,
 ytick=\empty,
 axis line style={draw=none},
 xshift=9.25cm,yshift=1.8cm,
 ]
\addplot[
    domain=-20:20,
    samples=500, 
    color=airforceblue,
]{-15*sin(deg(x+14))/(x+14)+cos(deg(x)-1)-10*sin(deg(x-7))/(x-7)+cos(deg(x)-1) -20*sin(deg(x-13))/(x-13) };
\end{axis}

\end{tikzpicture}

}

    \caption{The schematic of our asynchronous OAC model. Here, device $k$ transmits the waveform $\bm{s}_k$ over the MAC and the \ac{FC} receives $\bm{v} = \sum_{k=1}^{K}\bm{E}^{d_k}h_k\bm{s}_k + \bm{z}$ where  $\bm{E}^{d_k}$ and $h_k$ model the time synchronization error and channel coefficient for device $k$, respectively. $\bm{z}$ denotes \ac{AWGN} at the receiver. After \ac{ADC}, the received signal $\bm{v}$ is filtered by $\bm{A}$ followed by the function $\psi(\cdot)$ to estimate the desired function. }
    \label{fig:receiver}
\end{figure*}
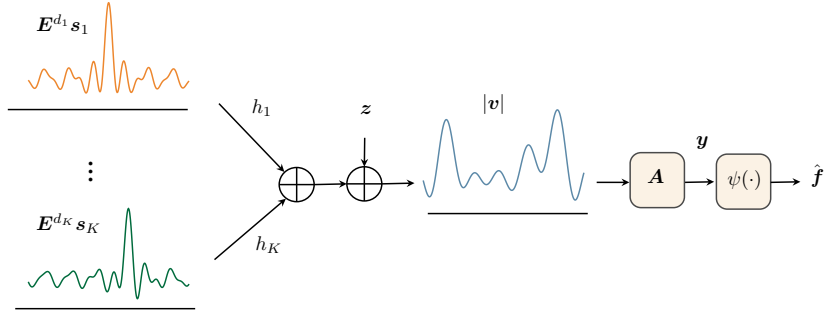
% =======================

%=============================
\subsection{Receiver Architecture}\label{sec:rx_arch}
%=============================

Invoking the given notation, the superimposed samples given by the wireless \ac{MAC} are 
%--------------
\begin{equation}
    \bm{v} = \sum\nolimits_{k=1}^Kh_k\bm{E}^{d_k}\bm{s}_k + \bm{z}.
\end{equation}
%--------------
where $\bm{z}\in\mathbb{C}^{N_t}\sim\mathcal{N}(\bm{0}, \bm{I}\sigma_z^2)$ is \ac{AWGN} and $h_k\in\mathbb{C}$ is the uplink fading coefficient from device $k$ to the \ac{FC}. For notational simplicity, $h_k$ is assumed flat for $N_t$ samples, but our scheme works as long as it is flat for $N_s$ samples. We also consider that device $k$ is able to estimate $|h_k|$ perfectly but that the phase $\angle h_k$ is unknown and uniformly distributed, i.e., $\angle h_k\sim \mathcal{U}[0,2\pi)$, as motivated in the introduction. Our channel model does not contain a path loss factor, since we assume large-scale fading is compensated via power control. In a standard communication system, the receiver would apply a matched filter and decimate the output by $N_s$ samples to recover the symbol vector. However, due to the asynchronous channel, it is not obvious that the matched filter is the best approach. Therefore, we consider a general receive filter as $\bm{A}\in\mathbb{C}^{N\times N_t}$, which is applied at the receiver side to generate the filtered signal as
%--------------
\begin{equation}
    \label{eq:received}
    \bm{y} := \bm{A}\bm{v} = \bm{A}\sum\nolimits_{k=1}^Ke^{j\theta_k}\bm{E}^{d_k}\bm{G}\sqrt{\bm{x}_k} + \tilde{\bm{z}},
\end{equation}
%--------------
 where $\tilde{\bm{z}}$ is channel noise with $\mathcal{N}(\bm{0}, \bm{A}\sigma_z^2)$ and $\theta_k$ is the phase of $h_k$. Finally, the receiver exploits the $M$ phase-shifted copies of $\bm{y}$ to form the estimate of the desired function. Specifically, the receiver follows the symbol-level procedure of \cite{goldenbaum2013robust} expressed using the function~$\psi(\cdot)$ as
%--------------
\begin{equation}
    \label{eq:fMeasure}
    \hat{\bm{f}}' = \psi\left(\bm{y}^{(1)},...,\bm{y}^{(M)}\right) = \frac{1}{MK}\sum\nolimits_{m=1}^M\overline{\bm{y}}^{(m)}\odot\bm{y}^{(m)},
\end{equation}
where
\begin{equation}
    \bm{y}^{(m)} = \bm{A}\sum\nolimits_{k=1}^Ke^{j(\tilde{\theta}_m^k)}\bm{E}^{d_k}\bm{G}\sqrt{\bm{x}_k} + \tilde{\bm{z}}^{(m)},
\end{equation}
%--------------
$\overline{\bm{y}}$ is the complex-conjugate of $\bm{y}$, $\odot$ is the Hadamard product, and $\tilde{\theta}_m^k: = \theta_{k,m}+\theta_k$  has uniform distribution, i.e., $ \tilde{\theta}_m^k \sim \mathcal{U}[0,2\pi)$. The overall communication procedure is summarized in Fig \ref{fig:receiver}.

For the $n$-th element of vector $\hat{\bm{f}}'$, we have
%--------------
\begin{align}
    \hat{f}_n'  =  \frac{1}{MK}\sum\nolimits_{m=1}^M\Big|\sum\nolimits_{k=1}^K {\rm e}^{j\tilde{\theta}_m^k}\bm{a}_n^{\mathsf{T}}\bm{B}_k\sqrt{\bm{x}_k}+\tilde{\bm{z}}^{(m)}\Big|^2,   
\end{align}
%--------------
for $n \in [N]$, where $\bm{a}_n$ is the $n$-th row of matrix $\bm{A}$ and $\bm{B}_k := \bm{E}^{d_k}\bm{G}\in \mathbb{R}^{N_t\times N}$. We further have 
%--------------
\begin{align}
    \hat{f}_n' = \frac{1}{K}\left(\sum\nolimits_{k=1}^K |\bm{a}_n^{\mathsf{T}}\bm{B}_k\sqrt{\bm{x}_k}|^2 + r_M + z_M\right),
\end{align}
%--------------
where $r_M$ involves the cross-correlation terms, i.e.,
%--------------
\begin{align*}
    r_M: = 2\operatorname{Re}\bigg[\sum_{\myoverset{k}{k'\neq k}}^K\sum_{m=1}^M\frac{{\rm e}^{j(\tilde{\theta}_m^k-\tilde{\theta}_m^{k'})}}{M}\bm{a}_n^{\mathsf{T}}\bm{B}_k\sqrt{\bm{x}_k}\sqrt{\bm{x}_{k'}}^{\mathsf{H}}\bm{B}_{k'}^{\mathsf{T}}\bm{a}_n\bigg],
\end{align*}
%--------------
and $z_M$ involves all the noise-related terms, i.e.,
%--------------
\begin{align}
    \nonumber
    % \label{eq:noise_error}
    z_M: = \frac{1}{M}\sum\nolimits_{m=1}^M\left(2\operatorname{Re}\left[\tilde{\bm{z}}\sum\nolimits_{k=1}^K {\rm e}^{j\tilde{\theta}_m^k}\bm{a}_n^{\mathsf{T}}\bm{B}_k\sqrt{\bm{x}_k}\right] + |\tilde{\bm{z}}|^2\right).
\end{align}
%--------------
In expectation, $\mathbb{E}_{\tilde{\bm{\theta}}}[r_M]=0$ and $\mathbb{E}_{\bm{z}}[z_M] = |\bm{a_n}|^2\sigma_z^2$. Thus, the receiver subtracts this term from $\hat{f}_n'$ to form the final estimate 
%--------------
\begin{align}\label{eq:final_est}
    \hat{f}_n = \hat{f}_n' - |\bm{a_n}|^2\sigma_z^2/K.
\end{align}
%--------------
For a detailed analysis of the error terms $r_M$ and $z_M$, we refer to \cite{goldenbaum2013robust}. In expectation, we have
%--------------
\begin{align}
    \label{eq:estmiedExp}
    \mathbb{E}_{\tilde{\bm{\theta}}, \bm{z}} [\hat{f}_n] = \frac{1}{K}\bm{a}_n^{\mathsf{T}}\Big (\sum\nolimits_{k=1}^K \bm{B}_k\sqrt{\bm{x}_k}  \sqrt{\bm{x}_k}^{\mathsf{H}} \bm{B}_k^{\mathsf{T}}\Big) \bm{a}_n.
\end{align}
%--------------
Now, if there is no synchronization error, then $d_k=0$ and $\bm{B}_k = \bm{G}$ for $k\in [K]$. Hence, by setting $\bm{A} = \bm{G}^{\mathsf{T}}$, \eqref{eq:estmiedExp} yields 
%--------------
\begin{align}
    \nonumber
    &\mathbb{E}_{\tilde{\bm{\theta}}, \bm{z}} [\hat{f}_n] = \frac{1}{K}\bm{g}_n^{\mathsf{T}}\bm{G}\Big(\sum\nolimits_{k=1}^K \sqrt{\bm{x}_k}\sqrt{\bm{x}_k}^{\mathsf{T}} \Big) \bm{G}^{\mathsf{T}}\bm{g}_n \\ \label{eq:estmiedfn}
    &=\frac{1}{K}\bm{e}_n^{\mathsf{T}}\Big(\sum\nolimits_{k=1}^K \sqrt{\bm{x}_k}\sqrt{\bm{x}_k}^{\mathsf{T}} \Big) \bm{e}_n  = \frac{1}{K}\sum\nolimits_{k=1}^K {x}_k[n],
\end{align}
%--------------
where $\bm{g}_n$ denotes the $n$-th column of matrix $\bm{G}$ and recall that $\bm{e}_n\in \mathbb{R}^{N}$ is $n$-th canonical basis. This tells us that if there is no synchronization error, the classical matched filter $\bm{A} = \bm{G}^{\mathsf{T}}$ generates unbiased estimates of the desired functions for the $N$ transmitted symbols.

However, for nonzero delays $d_k$, $\bm{B}_k \neq \bm{G}$ and the matched filter no longer yields unbiased estimates. Therefore, we wish to find $\bm{A}$ such that $\hat{\bm{f}}$ is an unbiased estimator for the desired functions $\bm{f}:=[f_1,\ldots,f_N]^{\mathsf{T}}$, where $f_n$ defined in \eqref{eq:function}. This is a satisfiability problem that can be expressed as
%--------------
\begin{align}
    \label{eq:ProblemP1}
     \mathcal{P}_1 ={\rm find} \quad \bm{A},
    \quad   {\rm s.t.} \quad \mathbb{E}_{\tilde{\bm{\theta}}, \bm{z}} \left[ \hat{\bm{f}}(\bm{y}) \right] = \frac{1}{K}\sum\nolimits_{k=1}^K\bm{x}_k.
\end{align}
%--------------
If such an $\bm{A}$ exists, we want an algorithm that finds it without knowledge of any $d_k$ but with knowledge of $d$, i.e., the maximum delay. In the next section, we investigate the condition under which $\bm{A}$ can give an unbiased estimator. 

%===================================================
\section{Asynchronous OAC}\label{sec:as_OAC}
%===================================================

In this section, we first seek the conditions for the receive filter $\bm{A}$ to obtain an unbiased estimator. Then, we propose a design for $\bm{A}$ that minimizes channel noise while retaining a near-zero bias.

To cancel the effect of the time-shifting matrices $\bm{B}_k$ in \eqref{eq:estmiedExp},  we propose finding $\bm{a}_n$ such that the matrix multiplications of $\bm{B}_k$ with $\bm{a}_n$ becomes $\bm{e}_n$ for all $k$, i.e., 
%--------------
\begin{align}
    \label{eq:a_nprod}
    \bm{a}_n^{\mathsf{T}}\bm{B}_k = \bm{e}_n^{\mathsf{T}}, \quad k\in [K],~~ n\in [N].
\end{align}
%--------------
Note that such a choice solves \eqref{eq:ProblemP1} exactly but restricts the solution to a subset of the feasible set. Recall that $\bm{B}_k = \bm{E}^{d_k}\bm{G}$. Since the receiver does not know $d_k$, the filter $\bm{a}_n$ must be designed such that \eqref{eq:a_nprod} holds for all possible delays $d_k\in[0,d]$. Since matrix $\bm{G}$ has a block diagonal structure ($\bm{G} : = \bm{I}_N \otimes \bm{g}$), matrix $\bm{B}_k$ also becomes block diagonal. Therefore, only  $N_s$  elements of $\bm{a}_n$ contributes to the multiplication in \eqref{eq:a_nprod}. In particular, $\bm{a}_n$ has the structure
 %--------------
 \begin{align}
    \label{eq:a_ne_n}
     \bm{a}_n = \bm{e}_n \otimes \tilde{\bm{a}},
 \end{align}
 %--------------
where $\tilde{\bm{a}} \in \mathbb{R}^{N_s}$. Moreover, to characterize the inter-symbol interference error, we decompose vector $\tilde{\bm{a}}$ into two parts as $\tilde{\bm{a}} = \tilde{\bm{a}}_1 + \tilde{\bm{a}}_2$, where
%--------------
\begin{subequations}
\label{eq:alphatotal}
\begin{align}
    \label{eq:alpha1}
    \tilde{\bm{a}}_1 &  = [\bm{\beta}^{\mathsf{T}},\bm{0}_{N_s-d}^{\mathsf{T}}]^{\mathsf{T}}, \quad \bm{\beta} \in \mathbb{R}^{d} \\\label{eq:alpha2}
    \tilde{\bm{a}}_2 &  = [\bm{0}_d^{\mathsf{T}}, \bm{\alpha}^{\mathsf{T}}]^{\mathsf{T}}, \quad \bm{\alpha} \in \mathbb{R}^{N_s-d}.
\end{align}
\end{subequations}
%--------------
Regarding the decomposition in \eqref{eq:alphatotal}, we have Lemma \ref{lem:1}. 
%--------------
\begin{lem}\label{lem:1}
    Let vector $\Tilde{a}$ decomposed into $\Tilde{a}_1,\Tilde{a}_2$ as defined in \eqref{eq:alphatotal}. Then, we have the following 
    \begin{align}
         \bm{a}_n^{\mathsf{T}}\bm{B}_k = \bm{e}_n^{\mathsf{T}} \otimes c_k  + \bm{e}_{n-1}^{\mathsf{T}} \otimes r_k,
    \end{align}
    where $c_k :=  \tilde{\bm{a}}_2^{\mathsf{T}} \tilde{\bm{E}}^{d_k}\bm{g}$ and  $r_k := \tilde{\bm{a}}_1^{\mathsf{T}}(\tilde{\bm{E}}^{N_s-d_k})^\mathsf{T}\bm{g}$ and $\tilde{\bm{E}}\in \mathbb{R}^{N_s\times N_s}$ is defined as
%--------------
\begin{equation}\label{eq:Etilde}
   \tilde{\bm{E}} : = \begin{bmatrix}
            \bm{0}_{N_s-1}^{\mathsf{T}} & 0 \\
            \bm{I}_{N_s-1} &  \bm{0}_{N_s-1}
    \end{bmatrix}.
\end{equation}
%-------------- 
\end{lem}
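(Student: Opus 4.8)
The plan is to exploit the block structure induced by the Kronecker factorization $\bm{G} = \bm{I}_N \otimes \bm{g}$ together with the structural form $\bm{a}_n = \bm{e}_n \otimes \tilde{\bm{a}}$, so that the product $\bm{a}_n^{\mathsf{T}}\bm{B}_k$ reduces to tracking how a single delayed pulse distributes across two consecutive length-$N_s$ blocks. First I would partition the index set into $N$ consecutive blocks of length $N_s$ and observe that $\bm{a}_n$ is supported entirely on block $n$, where it equals $\tilde{\bm{a}}$. Then I would describe the $m$-th column of $\bm{B}_k = \bm{E}^{d_k}\bm{G}$: it is the pulse $\bm{g}$, originally occupying block $m$, shifted down by $d_k < N_s$ samples. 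This down-shift splits the pulse into a within-block piece occupying the last $N_s - d_k$ rows of block $m$, and a spillover piece occupying the first $d_k$ rows of block $m+1$.

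The key identification is that the within-block piece equals $\tilde{\bm{E}}^{d_k}\bm{g}$, the length-$N_s$ down-shift of $\bm{g}$, while the spillover piece equals $(\tilde{\bm{E}}^{N_s - d_k})^{\mathsf{T}}\bm{g}$, an up-shift of $\bm{g}$ by $N_s - d_k$. Because $\bm{a}_n$ lives only on block $n$, the inner product of $\bm{a}_n$ with the $m$-th column is nonzero only when block $n$ meets the support of that column, i.e.\ only for $m = n$ (through the within-block piece) and for $m = n-1$ (through the spillover piece of the pulse originating in block $n-1$). This immediately produces the two-term structure $\bm{a}_n^{\mathsf{T}}\bm{B}_k = \bm{e}_n^{\mathsf{T}}\otimes(\tilde{\bm{a}}^{\mathsf{T}}\tilde{\bm{E}}^{d_k}\bm{g}) + \bm{e}_{n-1}^{\mathsf{T}}\otimes(\tilde{\bm{a}}^{\mathsf{T}}(\tilde{\bm{E}}^{N_s-d_k})^{\mathsf{T}}\bm{g})$, with all other coordinates vanishing.

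The remaining step is to replace the full vector $\tilde{\bm{a}}$ by the appropriate sub-vector using $\tilde{\bm{a}} = \tilde{\bm{a}}_1 + \tilde{\bm{a}}_2$. Here the bound $d_k \le d$ is essential: the spillover vector $(\tilde{\bm{E}}^{N_s-d_k})^{\mathsf{T}}\bm{g}$ is supported on its first $d_k \le d$ coordinates, which is disjoint from the support of $\tilde{\bm{a}}_2$ (its last $N_s - d$ coordinates), so $\tilde{\bm{a}}_2^{\mathsf{T}}(\tilde{\bm{E}}^{N_s-d_k})^{\mathsf{T}}\bm{g} = 0$ and the $(n-1)$-coefficient collapses to $r_k = \tilde{\bm{a}}_1^{\mathsf{T}}(\tilde{\bm{E}}^{N_s-d_k})^{\mathsf{T}}\bm{g}$. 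I would then treat the $n$-coefficient $c_k$ by the same support/delay bookkeeping applied to the within-block term $\tilde{\bm{E}}^{d_k}\bm{g}$.

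I expect the main obstacle to be exactly this boundary bookkeeping: correctly aligning the row indices when the delayed pulse straddles the block boundary, and verifying the disjointness of supports that lets only $\tilde{\bm{a}}_1$ or $\tilde{\bm{a}}_2$ survive in each coefficient. The delicate point is that the support of the within-block pulse $\tilde{\bm{E}}^{d_k}\bm{g}$, namely coordinates $d_k+1,\dots,N_s$, does intersect the support of $\tilde{\bm{a}}_1$ (the first $d$ coordinates) whenever $d_k < d$; hence the reduction of the $n$-coefficient to $\tilde{\bm{a}}_2$ alone must be argued carefully from the delay constraints rather than read off directly, and this is precisely the interaction that the decomposition \eqref{eq:alphatotal} is introduced to control.
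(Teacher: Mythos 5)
Your argument is essentially the paper's proof viewed column-wise instead of row-wise: the paper first splits $\tilde{\bm{a}}=\tilde{\bm{a}}_1+\tilde{\bm{a}}_2$ and computes how $(\bm{e}_n^{\mathsf{T}}\otimes\tilde{\bm{a}}_i^{\mathsf{T}})$ transforms under right-multiplication by $\bm{E}^{d_k}$, while you track how each delayed pulse (column of $\bm{B}_k$) splits across the block boundary; the bookkeeping is the same. Your derivation of the two-term structure
\[
\bm{a}_n^{\mathsf{T}}\bm{B}_k \;=\; \bm{e}_n^{\mathsf{T}}\otimes\bigl(\tilde{\bm{a}}^{\mathsf{T}}\tilde{\bm{E}}^{d_k}\bm{g}\bigr)\;+\;\bm{e}_{n-1}^{\mathsf{T}}\otimes\bigl(\tilde{\bm{a}}_1^{\mathsf{T}}(\tilde{\bm{E}}^{N_s-d_k})^{\mathsf{T}}\bm{g}\bigr)
\]
is correct, including the collapse of the spillover coefficient to $\tilde{\bm{a}}_1$: the support $\{1,\dots,d_k\}$ of $(\tilde{\bm{E}}^{N_s-d_k})^{\mathsf{T}}\bm{g}$ is contained in $\{1,\dots,d\}$, hence disjoint from that of $\tilde{\bm{a}}_2$. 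The point you flag as delicate at the end is the right one, but you should draw the stronger conclusion: the reduction of the $\bm{e}_n$ coefficient from $\tilde{\bm{a}}^{\mathsf{T}}\tilde{\bm{E}}^{d_k}\bm{g}$ to $c_k=\tilde{\bm{a}}_2^{\mathsf{T}}\tilde{\bm{E}}^{d_k}\bm{g}$ cannot be ``argued carefully''---it is false in general. The residual term is $\tilde{\bm{a}}_1^{\mathsf{T}}\tilde{\bm{E}}^{d_k}\bm{g}=\sum_{i=d_k+1}^{d}\beta_i\, g[i-d_k]$, which is generically nonzero whenever $d_k<d$ and $\bm{\beta}\neq\bm{0}$ (e.g.\ $N_s=3$, $d=2$, $d_k=1$, $\bm{\beta}=(0,1)^{\mathsf{T}}$ gives $g[1]\neq 0$ for a Gaussian pulse).

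So the gap lies not in your proof but in the lemma and in the paper's own proof of it: the paper's intermediate identity \eqref{eq:a_1} silently discards exactly this term, the correct identity being $(\bm{e}_n^{\mathsf{T}}\otimes\tilde{\bm{a}}_1^{\mathsf{T}})\bm{E}^{d_k}=\bm{e}_{n-1}^{\mathsf{T}}\otimes\tilde{\bm{a}}_1^{\mathsf{T}}(\tilde{\bm{E}}^{N_s-d_k})^{\mathsf{T}}+\bm{e}_n^{\mathsf{T}}\otimes\tilde{\bm{a}}_1^{\mathsf{T}}\tilde{\bm{E}}^{d_k}$, whose second term survives multiplication by $\bm{G}$. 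The lemma would be exact if $c_k$ were defined as $\tilde{\bm{a}}^{\mathsf{T}}\tilde{\bm{E}}^{d_k}\bm{g}$, which is precisely the form your column-support argument produces. The omission happens to be harmless downstream: condition \eqref{eq:c_k} forces $\tilde{\bm{a}}_1=\bm{0}$ via \eqref{eq:alpha_1_zero}, and once $\tilde{\bm{a}}_1=\bm{0}$ the dropped term vanishes identically, so the unbiasedness condition \eqref{eq:unbias_const}, Problem $\mathcal{P}_2$, and Proposition \ref{prop:existence} are unaffected. Keeping the full $\tilde{\bm{a}}$ in the $\bm{e}_n$ coefficient and only then specializing, as you do, is the correct way to state and prove the lemma.
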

\begin{proof}
See Appendix \ref{sec:proof}.  
\end{proof}
%--------------
Using Lemma \ref{lem:1}, the equality in \eqref{eq:a_nprod} becomes 
%--------------
\begin{align}
    \label{eq:ckplusrk}
    \bm{e}_n^{\mathsf{T}} \otimes c_k  + \bm{e}_{n-1}^{\mathsf{T}} \otimes r_k =  \bm{e}_n^{\mathsf{T}}, \quad k\in [K].
\end{align}
%--------------
To satisfy \eqref{eq:ckplusrk}, we have to set $c_k$ and $r_k$ as follows
%--------------
\begin{subequations}
\begin{align}
\label{eq:innerProda_E}
        c_k &= 1,  \\
        \label{eq:c_k}
        r_k & = 0, 
\end{align}
\end{subequations}
%--------------
for all $k\in [K]$. The condition in \eqref{eq:c_k} simply results in 
 %--------------
 \begin{align}\label{eq:alpha_1_zero}
  r_k = 0 \Rightarrow \tilde{\bm{a}}_1^{\mathsf{T}}(\tilde{\bm{E}}^{N_s-d_k})^\mathsf{T}\bm{g} = 0 \Rightarrow \tilde{\bm{a}}_1 = \bm{0},
 \end{align}
 %--------------
 accordingly,  we have $\tilde{\bm{a}} = \tilde{\bm{a}}_2$. 
%  Due to the random time-shift $d$, the summation of the first $d$ elements of the inner product in \eqref{eq:innerProda_E} can be corrupted by inter-symbol interference. To overcome this issue, we propose discarding them by appointing zero values to the first $d$ elements of vector $\bm{a}_n$, i.e., 
% %--------------
% \begin{align}\label{eq:alpha_tilde}
%     \tilde{\bm{a}} = [\bm{0}_{d}^{\mathsf{T}}, \bm{\alpha}^{\mathsf{T}}],
% \end{align}
% %--------------
% where $\bm{\alpha}\in \mathbb{R}^{N_s-d}$. 
Then, for the next condition in \eqref{eq:innerProda_E},  we have $ \tilde{\bm{a}}_2^{\mathsf{T}} \tilde{\bm{E}}^{d_k}\bm{g} = 1$ for all $k \in [K]$, which results in the following equation system
%--------------
\begin{equation}
        \sum\nolimits_{n=1}^{N_s-d}{\alpha}[n]g[d_k+n] = 1, \quad d_k  = 0, 1, \ldots, d, \forall k
\end{equation}
%--------------
or equivalently, 
%--------------
\begin{equation}\label{eq:unbias_const}
    \mathscr{H}(\bm{g})\bm{\alpha} = \mathds{1}_{d+1},
\end{equation}
%--------------
where $\mathscr{H}(\cdot)$ is the Hankel lifting operator defined in \eqref{eq:Hankel}.
%--------------
%----- start: Remove convolution stuff until we find a solution
\iffalse
which, in turn, is equivalent to
%--------------
\begin{align}
    \label{eq:Conv}
    \bm{g} \circledast  \bm{\alpha} = \mathds{1}_{N_s-d},
\end{align}
%--------------
where $\bm{g}$ is the reverse sorted version of the pulse shaping filter $\bm{g}$, and $\mathds{1}_{N_s-d}$ is the $1$-vector of size $(N_s-d)\times 1$. 
\fi
%----- end: Remove convolution stuff until we find a solution
Any vector $\bm{\alpha}$ satisfying \eqref{eq:unbias_const} yields an unbiased estimator with \eqref{eq:fMeasure}. Therefore, we formulate Problem $\mathcal{P}_2$ as
%--------------
\begin{align}
    \label{eq:ProblemP2}
     \mathcal{P}_2 ={\rm find} \quad \bm{\alpha},
    \quad   {\rm s.t.} \quad \mathscr{H}(\bm{g})\bm{\alpha} = \mathds{1}_{d+1},
\end{align}
%--------------
where a solution $\bm{\alpha}$ to Problem $\mathcal{P}_2$ also solves $\mathcal{P}_1$ through the following steps: 1) use \eqref{eq:alpha2} to get $\tilde{\bm{\alpha}}_2$, 2) use \eqref{eq:alpha_1_zero} to get $\tilde{\bm{\alpha}}=\bm{0}+\tilde{\bm{\alpha}}_2$, 3) use \eqref{eq:a_ne_n} to get the $\bm{a}_n$'s and finally 4) form $\bm{A}$ with the $\bm{a}_n$'s as rows. We give the condition under which Problem $\mathcal{P}_2$ is feasible in Proposition \ref{prop:existence}.
%----- start: Remove convolution stuff until we find a solution
\iffalse
%--------------
\begin{align}
    \label{eq:HanG}
    \mathscr{H}(\bm{g})\bm{\alpha} = \mathds{1}_{N_s-d},
\end{align}
%--------------
where $\mathscr{H}(\cdot)$ denotes Hankel lifting operator which is defined in \eqref{eq:Hankel}. 
\fi
%----- end: Remove convolution stuff until we find a solution
%-------------------
\begin{prop}\label{prop:existence}
Let the discrete time delays be restricted to $d_k\leq d$, the maximum delay be nonzero $d > 0$, and $\bm{g}$ be the pulse-shaping filter with $N_s$ taps. Then, the linear system of equations in \eqref{eq:unbias_const} has at least one solution if and only if the number of filter taps $N_s$ is greater than $d + {\rm rank}(\mathscr{H}(\bm{g}))$, i.e.,
%--------------
\begin{align}
    \label{eq:upperbound}
    N_s \geq  d + {\rm rank}(\mathscr{H}(\bm{g})),
\end{align}
%--------------
where $\mathscr{H}(\bm{g})$ is the Hankel structure of the filter $\bm{g}$.
\end{prop}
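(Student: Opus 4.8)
The plan is to read \eqref{eq:unbias_const} purely as a finite-dimensional linear system $\mathscr{H}(\bm{g})\bm{\alpha} = \mathds{1}_{d+1}$ in the unknown $\bm{\alpha}\in\mathbb{R}^{N_s-d}$, whose coefficient matrix $\mathscr{H}(\bm{g})\in\mathbb{R}^{(d+1)\times(N_s-d)}$ has exactly $d+1$ rows and $N_s-d$ columns. Existence of a solution is then governed by the Rouch\'e--Capelli (Kronecker--Capelli) theorem: the system is consistent if and only if appending the right-hand side as an extra column leaves the rank unchanged, i.e. $\operatorname{rank}(\mathscr{H}(\bm{g})) = \operatorname{rank}\!\left([\,\mathscr{H}(\bm{g})\mid\mathds{1}_{d+1}\,]\right)$, equivalently $\mathds{1}_{d+1}\in\operatorname{range}(\mathscr{H}(\bm{g}))$. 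So the whole question reduces to deciding when the all-ones vector lies in the column space of this shift-structured coefficient matrix, a subspace of dimension $r:=\operatorname{rank}(\mathscr{H}(\bm{g}))$.

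For the sufficiency direction ($N_s \ge d + r \Rightarrow$ existence) I would argue through the column space. The columns of $\mathscr{H}(\bm{g})$ are the successive length-$(d+1)$ windows of $\bm{g}$, and $r$ of them are linearly independent and span $\operatorname{range}(\mathscr{H}(\bm{g}))\subseteq\mathbb{R}^{d+1}$. The inequality $N_s \ge d + r$ is exactly $N_s - d \ge r$, i.e. the matrix carries at least as many shifted windows as its rank, so there is enough freedom in $\bm{\alpha}$ to realize any target inside that range; once the window count is large enough the Hankel structure forces full row rank $r=d+1$, the range becomes all of $\mathbb{R}^{d+1}$, and $\mathds{1}_{d+1}$ is trivially attained. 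The clean book-keeping step is to relate the tap count $N_s$, the number of shifts $N_s-d$, and the attained rank $r$, and to translate ``enough columns'' into ``range contains $\mathds{1}_{d+1}$''.

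The hard part, and the main obstacle, is the converse: showing that when $N_s < d+r$ no $\bm{\alpha}$ can exist, i.e. that a shortage of taps relative to the rank forces $\mathds{1}_{d+1}$ out of the column space so that the augmented matrix strictly gains rank. Here the generic bound $r\le\min(d+1,N_s-d)$ alone is not enough and the concrete Hankel/convolution structure of $\mathscr{H}(\bm{g})$ must be exploited: I would characterize the left null space $\{\bm{c}:\bm{c}^{\mathsf{T}}\mathscr{H}(\bm{g})=\bm{0}\}$, of dimension $(d+1)-r$, and show via the shift structure of $\bm{g}$ that it cannot be orthogonal to $\mathds{1}_{d+1}$ once $N_s<d+r$; by Rouch\'e--Capelli this certifies inconsistency. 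I expect this step to require a mild non-degeneracy assumption on the pulse $\bm{g}$ (so that the shifted windows behave generically and no accidental cancellation places $\mathds{1}_{d+1}$ inside a deficient range), and pinning down the exact threshold in \eqref{eq:upperbound} rather than a loose sufficient bound is the delicate point on which I would spend the most effort.
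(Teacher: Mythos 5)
Your opening reduction is the right one: \eqref{eq:unbias_const} is a finite linear system, and by Rouch\'e--Capelli it is solvable if and only if $\mathds{1}_{d+1}\in{\rm range}(\mathscr{H}(\bm{g}))$. (Note that the paper itself states Proposition \ref{prop:existence} without proof---only Lemma \ref{lem:1} is proved in the appendix---so your attempt has to stand entirely on its own.) However, both directions of your argument break down because of a structural fact you never confront: $\mathscr{H}(\bm{g})$ has exactly $N_s-d$ columns, so ${\rm rank}(\mathscr{H}(\bm{g}))\le N_s-d$ holds for \emph{every} $\bm{g}$, i.e., the inequality \eqref{eq:upperbound} is satisfied unconditionally. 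Consequently the direction you call the hard part is vacuous---there exists no instance with $N_s<d+{\rm rank}(\mathscr{H}(\bm{g}))$ to rule out---and the forward direction collapses into the claim that \eqref{eq:unbias_const} is \emph{always} solvable. That claim is false: take $N_s=2$, $d=1$, $\bm{g}=(1,2)^{\mathsf{T}}$; then $\mathscr{H}(\bm{g})=(2,1)^{\mathsf{T}}$ has rank one, \eqref{eq:upperbound} reads $2\ge 2$, yet $\alpha\,(2,1)^{\mathsf{T}}=(1,1)^{\mathsf{T}}$ has no solution. So no proof of the literal statement can close; the proposition only becomes a genuine ``if and only if'' under a stronger reading of ${\rm rank}(\mathscr{H}(\bm{g}))$ (e.g., as an intrinsic Hankel/recurrence rank of the sequence $\bm{g}$ rather than the rank of this particular $(d+1)\times(N_s-d)$ window matrix) or under additional hypotheses on the pulse. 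Your instinct that a non-degeneracy assumption is needed points in this direction, but the issue is not a technicality in the converse---it invalidates the statement you set out to prove.

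Independently of this, your sufficiency argument contains a concrete false step. You assert that ``once the window count is large enough the Hankel structure forces full row rank $r=d+1$.'' It does not: for the rectangular pulse $\bm{g}=\mathds{1}_{N_s}$ every column of $\mathscr{H}(\bm{g})$ equals $\mathds{1}_{d+1}$, so the rank is $1$ regardless of how many columns there are (this is exactly the fact the paper's first corollary exploits); solvability there holds not because the range is all of $\mathbb{R}^{d+1}$ but because $\mathds{1}_{d+1}$ itself spans that one-dimensional range. Likewise, ``at least as many shifted windows as the rank gives enough freedom to realize any target inside the range'' is circular: membership of $\mathds{1}_{d+1}$ in the range is precisely what must be established, and no count of columns against rank can establish it. What \emph{can} be proved cleanly, and what the paper actually uses downstream, is the sufficient statement: if $\mathscr{H}(\bm{g})$ has full row rank $d+1$---which for generic pulses holds once $N_s-d\ge d+1$, i.e., $d\le\lfloor(N_s-1)/2\rfloor$ as in Corollary \ref{eq:cor_general}---then ${\rm range}(\mathscr{H}(\bm{g}))=\mathbb{R}^{d+1}$ and a solution exists, while structured low-rank pulses such as the rectangular one must be handled by exhibiting $\mathds{1}_{d+1}$ in the column span directly. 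I would recommend recasting your write-up around that weaker but correct dichotomy.
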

%-------------------
%-------------------
\begin{cor}
 For a rectangular pulse shape, i.e., $\bm{g}=\mathds{1}_{N_s}$, the upper bound in \eqref{eq:upperbound} becomes $N_s \geq  d + 1$ or equivalently, 
 %--------------
 \begin{align}
      d \leq N_s -1,
 \end{align}
 %--------------
 where it comes from the fact that ${\rm rank}(\mathscr{H}(\bm{g})) = 1$ for the rectangular pulse. In other words, for the rectangular pulse, a single sample is enough to get an unbiased estimate.
\end{cor}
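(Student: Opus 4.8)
The plan is to specialize Proposition \ref{prop:existence} to the rectangular pulse $\bm{g} = \mathds{1}_{N_s}$ by evaluating ${\rm rank}(\mathscr{H}(\bm{g}))$ directly and substituting it into the feasibility bound \eqref{eq:upperbound}. First I would compute the Hankel lift of the all-ones vector: reading off the definition \eqref{eq:Hankel} with $L = N_s$, every entry of $\mathscr{H}(\bm{g})$ is a sample $g[i] = 1$, so $\mathscr{H}(\mathds{1}_{N_s})$ is simply the $(d+1)\times(N_s-d)$ matrix of all ones. The hypotheses $d > 0$ and (for the operator to be defined) $d \in [N_s - 1]$ guarantee both dimensions are at least one, so this matrix is nonzero; and since every column equals $\mathds{1}_{d+1}$, its column space is one-dimensional. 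Hence ${\rm rank}(\mathscr{H}(\bm{g})) = 1$.

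Next I would substitute this value into \eqref{eq:upperbound}. The bound $N_s \geq d + {\rm rank}(\mathscr{H}(\bm{g}))$ collapses to $N_s \geq d + 1$, which rearranges to $d \leq N_s - 1$, yielding the stated inequality.

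Finally, to justify the closing remark that a single sample suffices, I would inspect the unbiasedness constraint \eqref{eq:unbias_const} under this pulse. With $\mathscr{H}(\bm{g})$ all-ones, every row of the system $\mathscr{H}(\bm{g})\bm{\alpha} = \mathds{1}_{d+1}$ reduces to the same scalar equation $\mathds{1}_{N_s-d}^{\mathsf{T}}\bm{\alpha} = 1$, i.e. $\sum_n \alpha[n] = 1$. A minimal-support solution is any canonical basis vector $\bm{\alpha} = \bm{e}_j \in \mathbb{R}^{N_s-d}$, which through the reconstruction steps following \eqref{eq:ProblemP2} produces a receive filter $\tilde{\bm{a}}$ with a lone nonzero tap; operationally, decimating to one sample per symbol already gives an unbiased estimate.

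I expect no genuine obstacle here, as the result is a direct evaluation rather than an argument. The only points needing care are confirming that the Hankel matrix is nonempty over the admissible range $1 \leq d \leq N_s - 1$ (so its rank is exactly one and not zero), and phrasing the ``single sample'' conclusion precisely, since it is an interpretation of the minimal-support solution of $\sum_n \alpha[n] = 1$ rather than a distinct claim.
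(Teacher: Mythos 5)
Your proposal is correct and takes essentially the same route as the paper's own (inline) justification: the Hankel lift of $\mathds{1}_{N_s}$ is the $(d+1)\times(N_s-d)$ all-ones matrix, so ${\rm rank}(\mathscr{H}(\bm{g}))=1$ and \eqref{eq:upperbound} collapses to $N_s \geq d+1$, i.e., $d \leq N_s-1$. Your extra step---noting that \eqref{eq:unbias_const} reduces to the single scalar equation $\sum_n \alpha[n]=1$, solved by a canonical basis vector $\bm{\alpha}=\bm{e}_j$ and hence by a one-tap receive filter---is precisely the reading behind the paper's ``single sample'' remark, just spelled out more explicitly.
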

%-------------------'
%-------------------
\begin{cor}\label{eq:cor_general}
For a general pulse shape $\bm{g}$, a sufficient condition for delay compensation is that the maximum delay is upper bounded as
 %--------------
 \begin{align}
      d \leq \Big\lfloor \frac{N_s-1}{2}\Big\rfloor.
 \end{align}
 %--------------
Here, we use the fact that ${\rm rank}(\mathscr{H}(\bm{g})) \leq d+1$.

\end{cor}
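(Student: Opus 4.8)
The plan is to obtain this bound as an immediate consequence of the feasibility criterion in Proposition~\ref{prop:existence}. That result guarantees that the system \eqref{eq:unbias_const} is solvable---and hence that unbiased delay compensation is achievable---precisely when $N_s \geq d + {\rm rank}(\mathscr{H}(\bm{g}))$. Because this threshold depends on the pulse-specific quantity ${\rm rank}(\mathscr{H}(\bm{g}))$, my strategy is to upper bound the rank by a pulse-independent quantity, so as to produce a condition that holds uniformly over every pulse shape $\bm{g}$.

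The only substantive step is a dimensional observation. From the definition of the Hankel lifting operator in \eqref{eq:Hankel} with $L = N_s$, the matrix $\mathscr{H}(\bm{g})$ has exactly $d+1$ rows and $N_s-d$ columns; since the rank of a matrix never exceeds its number of rows, I would conclude ${\rm rank}(\mathscr{H}(\bm{g})) \leq d+1$ for any $\bm{g}$. Substituting this bound into the feasibility condition shows that it suffices to require $N_s \geq d + (d+1) = 2d+1$, which certainly implies $N_s \geq d + {\rm rank}(\mathscr{H}(\bm{g}))$ regardless of the actual rank.

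I would finish by rearranging $N_s \geq 2d+1$ into $2d \leq N_s - 1$ and invoking the integrality of $d$ to write the equivalent form $d \leq \lfloor (N_s-1)/2 \rfloor$, as claimed. There is no genuine obstacle here: once Proposition~\ref{prop:existence} is in hand, the argument reduces to the trivial rank-versus-rows inequality together with elementary arithmetic. The one point worth flagging is that this condition is sufficient but generally loose---equality in the rank bound would require the Hankel matrix to attain full row rank $d+1$, whereas highly structured pulses (for instance the rectangular pulse of the preceding corollary, for which ${\rm rank}(\mathscr{H}(\bm{g}))=1$) have much smaller rank and therefore tolerate substantially larger admissible delays.
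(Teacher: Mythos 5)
Your proposal is correct and matches the paper's own reasoning exactly: the corollary follows from Proposition~\ref{prop:existence} by bounding ${\rm rank}(\mathscr{H}(\bm{g})) \leq d+1$ via the row count of the $(d+1)\times(N_s-d)$ Hankel matrix, which gives the sufficient condition $N_s \geq 2d+1$, i.e., $d \leq \lfloor (N_s-1)/2 \rfloor$. Your added remark that the bound is loose for structured pulses (e.g., rectangular) is also consistent with the paper's preceding corollary.
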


%-------------------

By solving $\mathcal{P}_2$ we get a filter $\bm{A}$ which yields unbiased function estimation. However, in \eqref{eq:ProblemP2} there is no restriction on the norm of $\bm{a}_n$, which can lead to a large noise-induced error. Therefore, we propose minimization of $\|\bm{a}_n\|$ while retaining a small bias by regularizing over $\|\mathcal{H}(\bm{g})\bm{\alpha} - \mathds{1}_{d+1}\|$. We formulate this as a Tikhonov regularization problem
%-------------------
\begin{equation}
\label{eq:Prob3}
   \mathcal{P}_3 =  \min_{\bm{\alpha}}   \quad \|\mathscr{H}(\bm{g})\bm{\alpha} - \mathds{1}_{d+1}\|_2^2 + \lambda \| \bm{\alpha}\|_2^2,
\end{equation}
%-------------------
 where $\lambda > 0$ represents the regularization parameter which makes a tradeoff between the synchronization-induced bias and the variance caused by the noise. Tikhonov regularization is convex, and the closed-form solution to $\mathcal{P}_3$ is the generalized Moore–Penrose inverse of matrix $\mathscr{H}(\bm{g})$ \cite[Eq. 6.10]{boyd2004convex}, i.e.,
%-------------------
\begin{equation}\label{eq:proposed_filter}
    \bm{\alpha}^* = (\mathscr{H}(\bm{g})^{\mathsf{T}}\mathscr{H}(\bm{g}) + \lambda \bm{I}_{N_s-d} )^{-1}\mathscr{H}(\bm{g})^{\mathsf{T}}\mathds{1}_{d+1}. 
\end{equation}
%-------------------
Note that while computing the inverse matrix in \eqref{eq:proposed_filter} can be managed using the low-rank tensor decomposition method \cite{lee2016regularized}, Problem $\mathcal{P}_3$ needs to be solved only once and offline. % at \ac{FC}.   

In the next section, we assess the performance of \eqref{eq:proposed_filter}. 

% %===================================================
\section{Numerical Results}\label{sec:num}
% %===================================================
For the numerical results, we compare the bias and \ac{MSE} of the solution of our proposed filter in Problem $\mathcal{P}_3$  to the Goldenbaum’s scheme~\cite{goldenbaum2013robust}, i.e., when  $\bm{a} = \bm{g}$. The simulation code is available at \url{https://github.com/henrikhellstrom93/FilterAirComp}. In particular, we randomly generate a sequence of $N=10$ messages for each device from the compact set $\mathbb{D}_f \in [0, 3]$. These messages are pulse shaped with $\bm{g}$ as the Gaussian filter, normalized to $\|\bm{g}\|_2 = 1$. We consider $K=100$ devices, an $M=10$-length random phase sequence, and regularization parameter $\lambda=0.1$. Moreover, the noise of the channel has variance $\sigma_z^2=1$ and the channel coefficients are generated as Rayleigh fading coefficients with $h_k \sim \mathcal{C}\mathcal{N}(0,1)$ for $k\in [K]$. However, note that only the angle $\angle h_k$ is relevant for the simulation since channel magnitude is compensated for perfectly. We further evaluate the bias and \ac{MSE} over $N_{mc} = 10,000$ Monte-Carlo trials as 
\begin{align}
    \text{bias} &= \frac{1}{N_{mc}}\sum\nolimits_{j=1}^{N_{mc}}(\text{mean}(\bm{f}^{(j)}-\hat{\bm{f}}^{(j)})) \\
    \text{MSE} &= \frac{1}{N_{mc}}\sum\nolimits_{j=1}^{N_{mc}}(\text{mean}(\bm{f}^{(j)}-\hat{\bm{f}}^{(j)})^2),
\end{align}
where $\bm{f}^{(j)}$ is the $j$-th Monte-Carlo trial evaluated by applying \eqref{eq:function} to the randomly generated messages and $\hat{\bm{f}}^{(j)}$ denotes the the corresponding estimated value calculated with \eqref{eq:final_est} using the filter $\bm{a}_n$. 
% We will release the simulation code upon publication of the article.

In Fig \ref{fig:bigdelay}, we consider a scenario where the time synchronization error is high compared to the number of samples per symbol. Specifically, $N_s = 2d+2$, one sample more than the minimum requirement in Corollary \ref{eq:cor_general}. In this scenario, our proposed filter offers significantly better performance than the Goldenbaum method, both in terms of bias and \ac{MSE}. 

% ==================
\begin{figure}[!t]
\centering
    \begin{tikzpicture} 
    \begin{axis}[
        xlabel={$d$},
        ylabel={${\rm MSE}/{\rm bias}^{2}$},
        label style={font=\scriptsize},
        legend cell align={left},
        tick label style={font=\scriptsize} , 
        width=8cm,
        height=5.5cm,
        xmin=0, xmax=10,
        ymin=5e-4, ymax=0.77,
        %xtick={-1, -0.5, 0, 0.5, 1},
        %ytick={0, 20, 40, 60, 80, 100},
        % ymode = log,
       legend style={nodes={scale=0.45, transform shape}, at={(0.39,0.97)}}, 
        ymajorgrids=true,
        xmajorgrids=true,
        grid style=dashed,
        grid=both,
        grid style={line width=.1pt, draw=gray!15},
        major grid style={line width=.2pt,draw=gray!40},
    ]
% ==========
    \addplot[
        color=antiquefuchsia,
        mark=o,
        mark options = {rotate = 180},
        line width=1pt,
        mark size=2pt,
        ]
    table[x=d,y=MSE]
    {Data/Sim1.dat};
% ==========
  \addplot[
        color=cssgreen,
        mark=o,
        line width=1pt,
        mark size=2pt,
        ]
    table[x=d,y=MSE_mf]
    {Data/Sim1.dat};
% ==========
     \addplot[
        color=antiquefuchsia,
        mark=square,
        line width=1pt,
        mark size=2pt,
        ]
    table[x=d,y=bias]
    {Data/Sim1.dat};
% ==========
     \addplot[
    color=cssgreen,
    mark=square,
    line width=1pt,
    mark size=2pt,
    ]
table[x=d,y=bias_mf]
{Data/Sim1.dat};
    \legend{${\rm Our~MSE}$,${\rm Goldenbaum~MSE}$,${\rm Our~ bias}^{2}$,${\rm Goldenbaum~bias}^{2}$};
    \end{axis}
\end{tikzpicture}
  \caption{Performance comparison of our proposed filter (purple lines) and the classic matched filter (MF, green lines) in a high delay scenario ($N_s = 2d+2$). When the delay is long in comparison to the length of the pulse, the matched filter introduces a large bias to the function estimate. Our proposed filter nearly eliminates this bias while maintaining a relatively low variance. }
  \label{fig:bigdelay}
\end{figure}
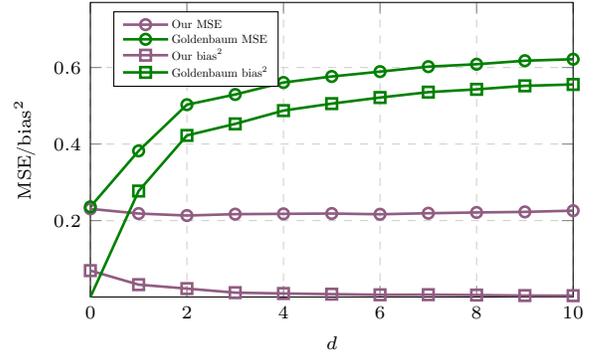

% ==================

In Fig~\ref{fig:smalldelay}, we consider the opposite scenario, i.e., that the delay is short in comparison to the pulse length ($N_s = 2d+20\gg d$). As expected, the performance of the Goldenbaum method is significantly better with long pulses. In this scenario, our proposed filter offers negligible improvement in terms of \ac{MSE}. However, as $d$ increases, the shape of the received waveform loses similarity with the matched filter, and the bias grows. Our proposed filter is better able to match the received waveform, thereby maintaining a low bias.

% ==================
\begin{figure}[!t]
\centering
    \begin{tikzpicture} 
    \begin{axis}[
        xlabel={$d$},
        ylabel={${\rm MSE}/{\rm bias}^{2}$},
        label style={font=\scriptsize},
        legend cell align={left},
        tick label style={font=\scriptsize} , 
        width=8cm,
        height=5.5cm,
        xmin=0, xmax=10,
        ymin=5e-4, ymax=0.38,
        %xtick={-1, -0.5, 0, 0.5, 1},
        %ytick={0, 20, 40, 60, 80, 100},
        % ymode = log,
       legend style={nodes={scale=0.6, transform shape}, at={(0.45,0.97)}}, 
        ymajorgrids=true,
        xmajorgrids=true,
        grid style=dashed,
        grid=both,
        grid style={line width=.1pt, draw=gray!15},
        major grid style={line width=.2pt,draw=gray!40},
    ]
% ==========
    \addplot[
        color=antiquefuchsia,
        mark=o,
        mark options = {rotate = 180},
        line width=1pt,
        mark size=2pt,
        ]
    table[x=d,y=MSE]
    {Data/Sim2.dat};
% ==========
  \addplot[
        color=cssgreen,
        mark=o,
        line width=1pt,
        mark size=2pt,
        ]
    table[x=d,y=MSE_mf]
    {Data/Sim2.dat};
% ==========
     \addplot[
        color=antiquefuchsia,
        mark=square,
        line width=1pt,
        mark size=2pt,
        ]
    table[x=d,y=bias]
    {Data/Sim2.dat};
% ==========
     \addplot[
    color=cssgreen,
    mark=square,
    line width=1pt,
    mark size=2pt,
    ]
table[x=d,y=bias_mf]
{Data/Sim2.dat};
    \legend{${\rm Our~MSE}$,${\rm Goldenbaum~MSE}$,${\rm Our~ bias}^{2}$,${\rm Goldenbaum~bias}^{2}$};
    \end{axis}
\end{tikzpicture}
  \caption{Performance comparison of our proposed filter (purple lines) and the classic matched filter (green lines) in a low delay scenario ($N_s = 2d+20$). When the delay is short compared to the length of the pulse, the matched filter performs quite well, and the improvement of our proposed filter is negligible in terms of \ac{MSE}. However, our proposed filter still achieves a significantly lower bias. }
  \label{fig:smalldelay}
\end{figure}
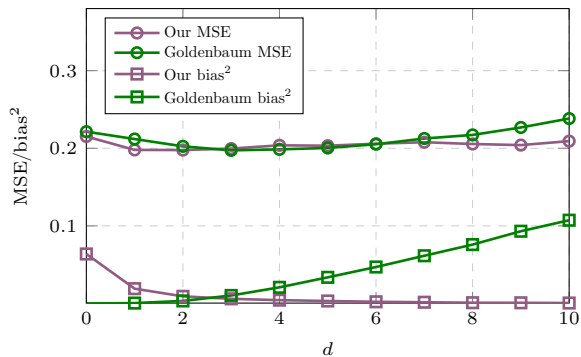

% ==================

In both figures, the proposed filter has a higher bias than the Goldenbaum method when the maximum delay $d$ is close to zero. This is because the variance of the estimator is the dominant source of \ac{MSE} for smaller delays. Therefore the Tikhonov regularizer opts to reduce the norm of the filter taps in favor of reducing the bias, but this can be adjusted via $\lambda$.

%===================================================
\section{Conclusion}\label{sec:conclusion}
%===================================================

In this study, we investigate \ac{OAC} under the assumptions of time-synchronization errors and unknown relative phases between the superimposed signals. Under these assumptions, neighboring symbols cause inter-symbol interference, which is not handled appropriately by the matched filter. Therefore, we propose a new receive filter that achieves unbiased function computation without knowledge of the time delays. Additionally, we formulate a Tikhonov regularization problem that produces an optimal filter given a tradeoff between the bias and noise-induced variance of the function estimates. Numerically, we show that this filter outperforms the matched filter in terms of both bias and \ac{MSE}, especially when the time delays are large in comparison to the length of the transmitted pulses.

\appendix

\subsection{Proof of Lemma \ref{lem:1}}\label{sec:proof}

Using the definition of $\Tilde{a}_1$ and $\Tilde{a}_2$ in \eqref{eq:a_1}, we can rewrite the product of $\bm{a}_n^{\mathsf{T}}\bm{B}_k$ as
%--------------
\begin{align}
    \nonumber
    \bm{a}_n^{\mathsf{T}}\bm{B}_k & = (\bm{e}_n^{\mathsf{T}} \otimes \tilde{\bm{a}}^{\mathsf{T}}) \bm{E}^{d_k}\bm{G}\\
    \nonumber
    & = (\bm{e}_n^{\mathsf{T}} \otimes (\tilde{\bm{a}}_1^{\mathsf{T}}+\tilde{\bm{a}}_2^{\mathsf{T}} )) \bm{E}^{d_k}\bm{G} \\
    & = \Big( (\bm{e}_n^{\mathsf{T}} \otimes \tilde{\bm{a}}_1^{\mathsf{T}})\bm{E}^{d_k} + (\bm{e}_n^{\mathsf{T}} \otimes \tilde{\bm{a}}_2^{\mathsf{T}})\bm{E}^{d_k}\Big) \bm{G}.  \label{eq:a1a2}
\end{align}
%--------------
For the first term, one can check
%--------------
\begin{align}
    (\bm{e}_n^{\mathsf{T}} \otimes \tilde{\bm{a}}_1^{\mathsf{T}}) \bm{E}^{d_k}  = \bm{e}_{n-1}^{\mathsf{T}}\otimes \tilde{\bm{a}}_1^{\mathsf{T}}(\tilde{\bm{E}}^{N_s-d_k})^\mathsf{T}, \label{eq:a_1}
\end{align}
%--------------
where $\tilde{\bm{E}}\in \mathbb{R}^{N_s\times N_s}$ is defined as \eqref{eq:Etilde}.
For the second term, we have
%--------------
\begin{align}
    (\bm{e}_n^{\mathsf{T}} \otimes \tilde{\bm{a}}_2^{\mathsf{T}}) \bm{E}^{d_k}  = \bm{e}_n^{\mathsf{T}} \otimes \tilde{\bm{a}}_2^{\mathsf{T}}\tilde{\bm{E}}^{d_k}.   \label{eq:a_2}
\end{align}
%--------------
Substituting \eqref{eq:a_1} and \eqref{eq:a_2} into \eqref{eq:a1a2}, we obtain
%--------------
\begin{align}
    \nonumber
    \bm{a}_n^{\mathsf{T}}\bm{B}_k & = \bm{e}_n^{\mathsf{T}} \otimes \tilde{\bm{a}}_2^{\mathsf{T}}\tilde{\bm{E}}^{d_k}\bm{G} + \bm{e}_{n-1}^{\mathsf{T}}\otimes \tilde{\bm{a}}_1^{\mathsf{T}}(\tilde{\bm{E}}^{N_s-d_k})^\mathsf{T}(\bm{I}_N\otimes \bm{g}) \\ \nonumber
    & = \bm{e}_n^{\mathsf{T}} \otimes \tilde{\bm{a}}_2^{\mathsf{T}}\tilde{\bm{E}}^{d_k}(\bm{I}_N\otimes \bm{g}) + \bm{e}_{n-1}^{\mathsf{T}}\otimes \tilde{\bm{a}}_1^{\mathsf{T}}(\tilde{\bm{E}}^{N_s-d_k})^\mathsf{T}\bm{g}\\ \nonumber
    & = (\bm{e}_n^{\mathsf{T}} \bm{I}_N)\otimes \tilde{\bm{a}}_2^{\mathsf{T}}\tilde{\bm{E}}^{d_k}\bm{g} + \bm{e}_{n-1}^{\mathsf{T}}\otimes r_k \\
    & = \bm{e}_n^{\mathsf{T}}\otimes c_k +  \bm{e}_{n-1}^{\mathsf{T}}\otimes r_k, 
\end{align}
%--------------
where $c_k : = \tilde{\bm{a}}_2^{\mathsf{T}}\tilde{\bm{E}}^{d_k}\bm{g}$  and $r_k :=\tilde{\bm{a}}_1^{\mathsf{T}}(\tilde{\bm{E}}^{N_s-d_k})^\mathsf{T}\bm{g}$. Hence, we concluded the proof. 

\bibliographystyle{IEEEtran}
\bibliography{IEEEabrv,Ref}

% Generated by IEEEtran.bst, version: 1.14 (2015/08/26)
\begin{thebibliography}{10}
\providecommand{\url}[1]{#1}
\csname url@samestyle\endcsname
\providecommand{\newblock}{\relax}
\providecommand{\bibinfo}[2]{#2}
\providecommand{\BIBentrySTDinterwordspacing}{\spaceskip=0pt\relax}
\providecommand{\BIBentryALTinterwordstretchfactor}{4}
\providecommand{\BIBentryALTinterwordspacing}{\spaceskip=\fontdimen2\font plus
\BIBentryALTinterwordstretchfactor\fontdimen3\font minus \fontdimen4\font\relax}
\providecommand{\BIBforeignlanguage}[2]{{%
\expandafter\ifx\csname l@#1\endcsname\relax
\typeout{** WARNING: IEEEtran.bst: No hyphenation pattern has been}%
\typeout{** loaded for the language `#1'. Using the pattern for}%
\typeout{** the default language instead.}%
\else
\language=\csname l@#1\endcsname
\fi
#2}}
\providecommand{\BIBdecl}{\relax}
\BIBdecl

\bibitem{gastpar2003source}
M.~Gastpar and M.~Vetterli, ``Source-channel communication in sensor networks,'' in \emph{Info. Proc. in Sensor Net.}\hskip 1em plus 0.5em minus 0.4em\relax Springer, 2003.

\bibitem{abari2016over}
O.~Abari \emph{et~al.}, ``Over-the-air function computation in sensor networks,'' \emph{arXiv preprint arXiv:1612.02307}, 2016.

\bibitem{csahin2023survey}
A.~{\c{S}}ahin and R.~Yang, ``A survey on over-the-air computation,'' \emph{IEEE Communications Surveys \& Tutorials}, April 2023.

\bibitem{elsts2016microsecond}
A.~Elsts \emph{et~al.}, ``Microsecond-accuracy time synchronization using the {IEEE} 802.15. 4 {TSCH} protocol,'' in \emph{IEEE LCN Workshops}, 2016.

\bibitem{5gspecification}
3GPP, ``{3GPP} technical specification {TS} 38.104, {NR}; base station radio transmission and reception.''

\bibitem{papananos1999radio}
Y.~E. Papananos, \emph{{Radio-Frequency Microelectronic Circuits for Telecommunication Applications}}.\hskip 1em plus 0.5em minus 0.4em\relax Springer Science \& Business Media, 1999.

\bibitem{zhu2019broadband}
G.~Zhu \emph{et~al.}, ``Broadband analog aggregation for low-latency federated edge learning,'' \emph{IEEE Trans. Wireless Commun.}, 2019.

\bibitem{cao2020optimized}
X.~Cao \emph{et~al.}, ``Optimized power control for over-the-air computation in fading channels,'' \emph{IEEE Trans. on~Commun.}, 2020.

\bibitem{liu2020over}
W.~Liu \emph{et~al.}, ``Over-the-air computation systems: Optimization, analysis and scaling laws,'' \emph{IEEE Trans. Wireless Commun.}, 2020.

\bibitem{zang2020over}
X.~Zang \emph{et~al.}, ``Over-the-air computation systems: Optimal design with sum-power constraint,'' \emph{IEEE Wireless Commun. Letters}, 2020.

\bibitem{hellstrom2022unbiased}
H.~Hellstrom \emph{et~al.}, ``Unbiased over-the-air computation via retransmissions,'' in \emph{IEEE Globecom Conference}.\hskip 1em plus 0.5em minus 0.4em\relax IEEE, 2022.

\bibitem{hellstrom2023retransmission}
------, ``Federated learning over-the-air by retransmissions,'' \emph{IEEE Trans. Wireless Commun.}, 2023.

\bibitem{razavikia2023computing}
S.~Razavikia \emph{et~al.}, ``Computing functions over-the-air using digital modulations,'' \emph{arXiv preprint arXiv:2303.00577}, 2023.

\bibitem{shao2021federated}
Y.~Shao \emph{et~al.}, ``Federated edge learning with misaligned over-the-air computation,'' \emph{IEEE Trans. Wireless Commun.}, 2021.

\bibitem{razavikia2022blind}
S.~Razavikia \emph{et~al.}, ``Blind asynchronous over-the-air federated edge learning,'' in \emph{IEEE Globecom Workshops}, 2022.

\bibitem{goldenbaum2009function}
M.~Goldenbaum \emph{et~al.}, ``On function computation via wireless sensor multiple-access channels,'' in \emph{Wire. Commun. and Net. Conference}, 2009.

\bibitem{goldenbaum2013robust}
M.~Goldenbaum and S.~Stanczak, ``Robust analog function computation via wireless multiple-access channels,'' \emph{IEEE Trans. on~Commun.}, 2013.

\bibitem{kortke2014analog}
A.~Kortke \emph{et~al.}, ``Analog computation over the wireless channel: A proof of concept,'' in \emph{IEEE, SENSORS}, 2014.

\bibitem{csahin2022over}
A.~{\c{S}}ahin and R.~Yang, ``Over-the-air computation over balanced numerals,'' in \emph{IEEE Globecom Workshops}, 2022.

\bibitem{csahin2023distributed}
A.~{\c{S}}ahin, ``Distributed learning over a wireless network with non-coherent majority vote computation,'' \emph{IEEE Trans. Wireless Commun.}, 2023.

\bibitem{csahin2022demonstration}
------, ``A demonstration of over-the-air computation for federated edge learning,'' in \emph{IEEE Globecom Workshops}, 2022.

\bibitem{Golden2013Harnessing}
M.~Goldenbaum \emph{et~al.}, ``Harnessing interference for analog function computation in wireless sensor networks,'' \emph{IEEE Trans.~Sig.~Proc.}, 2013.

\bibitem{boyd2004convex}
S.~Boyd \emph{et~al.}, \emph{Convex optimization}.\hskip 1em plus 0.5em minus 0.4em\relax Cambridge university press, 2004.

\bibitem{lee2016regularized}
N.~Lee and A.~Cichocki, ``Regularized computation of approximate pseudoinverse of large matrices using low-rank tensor train decompositions,'' \emph{SIAM Journal on Matrix Analysis and Applications}, vol.~37, no.~2, pp. 598--623, 2016.

\end{thebibliography}

\subsection{Copyright Notice}\label{sec:copyright}
This paper has been accepted for publication in the proceedings of GLOBECOM 2023. The following copyright applies: ©2023 IEEE. Personal use of this material is permitted. Permission
from IEEE must be obtained for all other uses, in any current or future
media, including reprinting/republishing this material for advertising or
promotional purposes, creating new collective works, for resale or
redistribution to servers or lists, or reuse of any copyrighted
component of this work in other works

\end{document}